\newcommand{\bcomment}[1]{}
\newcommand{\TODO}[1]{}
\newcommand{\N}{\mathbb{N}}
\newcommand{\bfp}{\mathbf{p}}
\newcommand{\bfx}{\mathbf{x}}
\newcommand{\one}{\mathbf{1}}
\newcommand{\E}{\mathbb{E}}
\newcommand{\leqa}[1]{\stackrel{(#1)}{\leq}}
\newcommand{\eqa}[1]{\stackrel{(#1)}{=}}
\newcommand{\propa}[1]{\stackrel{(#1)}{\propto}}
\newcommand{\ER}{Erd\H{o}s-R\'{e}nyi }
\newtheorem{theorem}{Theorem}
\newtheorem{lemma}{Lemma}
\newtheorem{definition}{Definition}
\newtheorem{remark}{Remark}
\begin{document}

\title{Improved Achievability and Converse Bounds for \ER Graph Matching}

\numberofauthors{2} 
\author{
\alignauthor
Daniel Cullina\\
       \affaddr{University of Illinois at Urbana-Champaign}\\
       \affaddr{1308 W Main St}\\
       \affaddr{Urbana, Illinois, 61801}\\
       \email{cullina@illinois.edu}
\alignauthor
Negar Kiyavash\\
       \affaddr{University of Illinois at Urbana-Champaign}\\
       \affaddr{1308 W Main St}\\
       \affaddr{Urbana, Illinois, 61801}\\
       \email{kiyavash@illinois.edu}
}

\maketitle

\begin{abstract}
We consider the problem of perfectly recovering the vertex correspondence between two correlated \ER (ER) graphs.
For a pair of correlated graphs on the same vertex set, the correspondence between the vertices can be obscured by randomly permuting the vertex labels of one of the graphs.
In some cases, the structural information in the graphs allow this correspondence to be recovered.
We investigate the information-theoretic threshold for exact recovery, i.e. the conditions under which the entire vertex correspondence can be correctly recovered given unbounded computational resources.

Pedarsani and Grossglauser provided an achievability result of this type.
Their result establishes the scaling dependence of the threshold on the number of vertices.
We improve on their achievability bound.
We also provide a converse bound, establishing conditions under which exact recovery is impossible.
Together, these establish the scaling dependence of the threshold on the level of correlation between the two graphs.
The converse and achievability bounds differ by a factor of two for sparse, significantly correlated graphs.
\end{abstract}

\section{Introduction}
\label{section:intro}
In this paper we consider the problem of graph deanonymization, or graph matching.
In this problem, there are two correlated graphs on the same vertex set.
Call these $G_a$ and $G_b$.
By correlated we mean that the presence or absence of a particular edge in $G_a$ provides some information about the presence of that edge in $G_b$.
The correspondence between the vertices can be obscured by randomly permuting the vertex labels of $G_a$.
Given the permuted version of $G_a$ and the unaltered $G_b$, what can be learned about the correspondence between their vertex sets?

We focus on one particular variant: the problem of perfectly recovering the vertex correspondence between two correlated \ER (ER) graphs.
In some cases, the structural information in the graphs allow this correspondence to be recovered.
We investigate the information-theoretic threshold for exact recovery, i.e. the conditions under which the entire vertex correspondence can be correctly recovered given unbounded computational resources.

This question was first addressed by Pedarsani and Grossglauser, who provided an information-theoretic achievability result \cite{pedarsani_privacy_2011}.
Their result establishes the scaling dependence of the threshold on the number of vertices: in order for exact deanonymization to be feasible, $n$-vertex graphs $G_a$ and $G_b$ must have average degree at least $\Omega(\log n)$.
Our main result is a new achievability bound that improves on the bound from \cite{pedarsani_privacy_2011}. 
We also provide a converse bound, establishing conditions under which exact recovery is impossible.
Together, these establish the scaling dependence of the threshold on the level of correlation between the two graphs.
Call $G_a$ and $G_b$ sparse if their average degree is sublinear.
Call them significantly correlated if their intersection is larger than the intersection of similar independent graphs by a factor that grows to infinity.
In this regime, our converse and achievability bounds differ by a constant factor of two.

One motivation for the study of graph deanonymization comes from networks associated with internet services.
As these services have become ubiquitous, an enormous amount of data about the users of these services has been generated and collected.
Much of this data is \emph{structural}.
It reflects interactions between multiple users: communications from one user to another, personal relationships, transactions, and many more examples.
Other forms of data associated with single users are still have informative network structure.
Example of this type of data include home towns, employers, educational institutions, hobbies and interests, and purchase history.

This data allows unprecedented opportunities for analysis, particularly when multiple data sources are combined.
However, there are complex and difficult trade-offs between facilitating analysis and preserving privacy.

There are at least two fundamental reasons to attempt to learn vertex correspondences between networks.
First, if multiple networks reflect an common underlying network of relationships, then one can obtain a better estimate of the underlying network by combining multiple sources to overcome the effects of noisy data and omissions.
Second, if the data associated with one network is sensitive and the data associated with another network allows for the identification of users, then learning the vertex correspondence grant access to sensitive user information.

Responsible privacy management by data collectors requires an understanding of when sensitive information can and cannot be recovered from data.
A large portion of the recent work on graph deanonymization has involved the evaluation of heuristic algorithms on datasets derived from real-world networks.
These lines of work play a crucial role in advancing our understanding of the privacy of current systems.
We attempt to complement these efforts by contributing to a foundational theory that will inform the design of future system.

In addition to the practical motivations, this is an interesting and fundamental problem in theory of random graphs.
Throughout, we will discuss the connections to other questions regarding random graphs.

The remainder of the paper is organized as follows.
In Section~\ref{section:related-work}, we discuss some other work on information-theoretic limits of deanonymization.
In Section~\ref{section:model}, we introduce our notation and formalize the estimation problem and our model of correlated graphs.
In Section~\ref{section:results}, we state our main results.
Section~\ref{section:achievability} contains the proof of our main achievability result and Section~\ref{section:converse} contain the proof of our converse.
In Section~\ref{section:negative}, we consider negatively correlated graphs and present achievability and converse bounds for their deanonymization.
In Section~\ref{section:conclusion}, we suggest some directions for future work.

\section{Related Work}
\label{section:related-work}

Pedarsani and Grossglauser \cite{pedarsani_privacy_2011} were the first to approach the problem of finding information-theoretic conditions for deanonymization.
Since their work, a number of authors have considered extensions and variants of the deanonymization problem.

Ji et al. \cite{ji_structural_2014} investigated the feasibility of deanonymization under the configuration model of random graphs.
The configuration model generates graphs with a specified degree sequence \cite{bollobas_probabilistic_1980}.
Real world networks differ from \ER graphs in several ways.
One of the most obvious is that ER graphs have a binomial degree distribution (which becomes approximately Poisson for sparse ER graphs), which has a rapidly decaying upper tail.
In contrast, the degree distributions of many real world networks have much heavier upper tails \cite{albert_statistical_2002}.
The configuration model allows for the replication of this feature.

Ji et al. \cite{ji_your_2015} also investigated the effect of seed information on thresholds for deanonymization.
A seed vertex pair consists of a vertex from $G_a$ and the corresponding vertex from $G_b$.
They found sufficient conditions for complete deanonymization using two information sources: first using only the edges between seed vertices and other vertices and second using all edges.
In both cases, the dependence on the number of seeds was determined.
This paper also found sufficient conditions for deanonymization of a fraction $1-\epsilon$ of the vertices.

Some practical deanonymization algorithms start by attempting to locate a few seeds.
From these seeds the and then grow the graph matching from these seeds.
Algorithms for the latter step can scale very efficiently.
Narayanan and Shmatikov were the first to apply this method \cite{narayanan_-anonymizing_2009}.
They evaluated their performance empirically on graphs derived from social networks.

More recently, there has been some work evaluating the performance of this type of algorithm on graph inputs from random models.
Yartseva and Grossglauser examined a simple percolation algorithm for growing a graph matching \cite{yartseva_performance_2013}.
They find a sharp threshold for the number of initial seeds required to ensure that final graph matching includes every vertex.
The intersection of the graphs $G_a$ and $G_b$ plays an important role in the analysis of this algorithm.
Kazemi et al. extended this work and investigated the performance of a more sophisticated percolation algorithm\cite{kazemi_growing_2015}.

If the networks being deanonymized correspond to two distinct online services, it is unlikely that the user populations of the services are identical.
Kazemi et al. investigate deanonymization of correlated graphs on overlapping but not identical vertex sets \cite{kazemi_when_2015}.
They determine that the information-theoretic penalty for imperfect overlap between the vertex sets of $G_a$ and $G_b$ is relatively mild.
This regime is an important test of the robustness of deanonymization procedures.

\section{Model}
\label{section:model}
\subsection{Notation}
For a graph $G$, let $V(G)$ and $E(G)$ be the node and edge sets respectively. 
Let $[n]$ denote the set $\{1,\cdots,n\}$.
All of the $n$-vertex graphs that we consider will have vertex set $[n]$.
This is convenient for two reasons.
First, it gives a concrete canonical way to encode the graph: take the adjacency matrix with rows and columns indexed by $[n]$.
Second, there is a clear way to define the action of a permutation on the graph.
We will always think of a permutation as a bijective function $[n] \to [n]$.
The set of permutations of $[n]$ under the binary operation of function composition form the group $S_{[n]}$.

We denote the collection of all two element subsets of $[n]$ by $\binom{[n]}{2}$.
The edge set of a graph $G$ is $E(G) \subseteq \binom{[n]}{2}$.
Let $N = \binom{n}{2} = \left|\binom{[n]}{2}\right|$.

Represent a labeled graph on the vertex set $[n]$ by its edge indicator function $g : \binom{[n]}{2} \to [2]$.
The group $S_{[n]}$ has an action on $\binom{[n]}{2}$.
This is, there is a homomorphism $l$ from $S_{[n]}$ to $S_{\binom{[n]}{2}}$.
We can write the action of the permutation $\pi$ on the graph $G$ as the composition of functions $G \circ l(\pi)$, where $l(\pi)$ is the lifted version of $\pi$:
\begin{IEEEeqnarray*}{rCl}
l(\pi) &:& \binom{[n]}{2} \to \binom{[n]}{2}\\
&& \{i,j\} \mapsto \{\pi(i), \pi(j)\}.
\end{IEEEeqnarray*}
Whenever there is only a single permutation under consideration, we will follow the convention $\sigma = l(\pi)$.

\subsection{The Deanonymization Problem}
We are considering the following problem.
There are two correlated graphs on $n$ vertices: $G_a$ and $G_b$.
By correlation we mean that for each vertex pair $e$, presence or absence of $e \in E(G_a)$, or equivalently the indicator variable $G_a(e)$, provides some information about $G_b(e)$.
The true vertex labels of $G_a$ are removed and replaced with meaningless labels.
We model this by identifying the vertices of $G_a$ with the set $[n]$ and applying a random permutation $\Pi$.
This results in the graph $G_a \circ l(\Pi)$.
The original vertex labels of $G_b$ are preserved.
We would like to know the conditions under which it is possible to discover the true correspondence between the vertices of $G_a$ and the vertices of $G_b$.
In other words, under what conditions can the random permutation $\Pi$ be recovered exactly with high probability?

In this context, an achievability result demonstrates the existence of an algorithm or estimator that exactly recovers $\Pi$ with high probability.
We will refer to such an algorithm or estimator as a deanonymizer.
A converse result is an upper bound on the probability of exact recovery that applies to any deanonymizer.

\subsection{Correlated \ER Graphs}
To fully specify this problem, we need to define a joint distribution over $G_a$ and $G_b$.
In this paper, we will focus on \ER (ER) graphs.
We have already discussed some of the advantages and drawbacks of this model in Section~\ref{section:related-work}.

We will generate correlated \ER graphs as follows.
Let $G_a$ and $G_b$ be graphs on the vertex set $[n]$.
For each $e \in \binom{[n]}{2}$, the random variables $(G_a,G_b)(e)$ are i.i.d. and
\begin{equation*}
(G_a,G_b)(e) = 
\begin{cases}
(1,1) & \text{w.p. } p_{11} \\
(1,0) & \text{w.p. } p_{10} \\
(0,1) & \text{w.p. } p_{01} \\
(0,0) & \text{w.p. } p_{00}.
\end{cases}
\end{equation*}
Call this distribution $ER(n,\bfp)$, where $\bfp = (p_{11},p_{10},p_{01},p_{00})$.
Also define the marginal probabilities for $G_a$ and $G_b$:
\begin{IEEEeqnarray*}{rCl}
p_{1*} &=& p_{11} + p_{10}\\ 
p_{0*} &=& p_{01} + p_{00}\\ 
p_{*1} &=& p_{11} + p_{01}\\ 
p_{*0} &=& p_{10} + p_{00}.
\end{IEEEeqnarray*}
Note that $G_a \sim ER(n,p_{1*})$ and $G_b \sim ER(n,p_{*1})$.

Pedarsani and Grossglauser \cite{pedarsani_privacy_2011} introduced the following generative model for correlated \ER (ER) graphs.
Essentially the same model was used in \cite{ji_structural_2014,ji_your_2015}.
Let $H$ be an ER graph on $[n]$ with edge probability $r$.
Let $G_a$ and $G_b$ be independent random subgraphs of $H$ such that each edge of $H$ appears in $G_a$ and in $G_b$ with probabilities $s_a$ and $s_b$ respectively.
We will refer to this as the \emph{subsampling model}.
The $s_a$ and $s_b$ parameters control the level of correlation between the graphs.
This is equivalent to our $ER(n,\bfp)$ model with 
\begin{IEEEeqnarray*}{rCl}
p_{11} &=& rs_as_b \\
p_{10} &=& rs_a(1-s_b) \\
p_{01} &=& r(1-s_a)s_b \\
p_{00} &=& 1-r(s_a + s_b - s_as_b).
\end{IEEEeqnarray*}
The subsampling model is capable of representing any distribution over graph pairs in which $(G_a,G_b)(e)$ are i.i.d. and the graphs have nonnegative correlation, i.e. $p_{11} \geq p_{1*}p_{*1}$.
Observe that when $G_a$ and $G_b$ are independent, $r = 1$.
The general $ER(n,\bfp)$ can represent negatively correlated graphs as well.
We will examine these in Section~\ref{section:negative}.

We will be concerned primarily with the \emph{sparse} regime, defined by $p_{1*} \to 0$ and $p_{*1} \to 0$, or equivalently $p_{00} \to 1$.
When we have the condition $\frac{p_{11}p_{00}}{p_{01}p_{10}} \rightarrow \infty$, we will say that the graphs are \emph{significantly correlated}.
Solving for $r$ from the above definitions, we obtain
\begin{equation*}
r = \frac{p_{1*}p_{*1}}{p_{11}} = p_{11} + p_{10} + p_{01} + \frac{p_{10}p_{01}}{p_{11}}.
\end{equation*}
Thus $r \rightarrow 0$ is equivalent to $p_{00} \rightarrow 1$ and $\frac{p_{11}p_{00}}{p_{01}p_{10}} \rightarrow \infty$.

In the subsampling model, it is possible to interpret $H$ as representing some ground truth and $G_a$ and $G_b$ as incomplete observations of $H$.
However, this understates the generality of this model.
Let $H$ be an $ER(n,r)$ graph and let $A$ and $B$ be the probability transition matrices of stochastic maps $[2] \to [2]$.
Let $G_a(e)$ and $G_b(e)$ be noisy observations of $H(e)$ through the channels defined by $A$ and $B$ respectively, where all channels are independent.
Then $(G_a, G_b) \sim ER(n,\bfp)$, where
\bcomment{
\begin{IEEEeqnarray*}{rCl}
p_{11} &=& rA_{1,1}B_{1,1} + (1-r)A_{0,1}B_{0,1}\\
p_{10} &=& rA_{1,1}B_{1,0} + (1-r)A_{0,1}B_{0,0}\\
p_{01} &=& rA_{1,0}B_{1,1} + (1-r)A_{0,0}B_{0,1}\\
p_{00} &=& rA_{1,0}B_{1,0} + (1-r)A_{0,0}B_{0,0}
\end{IEEEeqnarray*}}
\begin{equation*}
\begin{pmatrix} p_{0,0} & p_{0,1} \\ p_{1,0} & p_{1,1} \end{pmatrix} =
A^T \begin{pmatrix} 1-r & 0 \\ 0 & r \end{pmatrix} B.
\end{equation*}


For some $e \in \binom{[n]}{2}$, consider the correlation $\rho$ between $G_a(e)$ and $G_b(e)$.
In the sparse, significantly correlated regime we have $p_{00} \to 1$, $\frac{p_{00} p_{11}}{p_{01} p_{10}} \to \infty$, and
\begin{IEEEeqnarray*}{rCl}
\rho &=&
\frac{\E[G_a(e)G_b(e)] - \E[G_a(e)]\E[G_b(e)]}{\sqrt{(\E[G_a(e)^2] - \E[G_a(e)]^2)(\E[G_b(e)^2] - \E[G_b(e)]^2)}}\\
&=& \frac{p_{11}p_{00} - p_{01}p_{10}}{\sqrt{p_{1*}p_{0*}p_{*1}p_{*0}}}\\
&=& \frac{p_{11}}{\sqrt{p_{1*}p_{*1}}}(1+o(1))\\
&=& \sqrt{s_a s_b}(1+o(1)).
\end{IEEEeqnarray*}
Thus the intuition that in the symmetric case, $s = s_a = s_b$ measures the level of correlation between $G_a$ and $G_b$ is accurate.

Note that the condition $\frac{p_{00} p_{11}}{p_{01} p_{10}} \to \infty$ is much weaker than the condition $\rho = \Omega(1)$.
For example, consider $p_{1*} = p_{*1} = n^{-\frac{1}{2}}$ and $p_{11} = n^{-1} \log n$.
Then $\frac{p_{00} p_{11}}{p_{01} p_{10}} = \log(n)(1 + o(1))$ and $\rho = n^{-\frac{1}{2}} \log(n)(1+o(1))$.

\section{Results}
\label{section:results}
\bcomment{
\begin{figure}
\centering
\input{plot-s-r.tex}
\caption{
}
\end{figure}}

\begin{figure}
\label{fig:pos}
\centering
\input{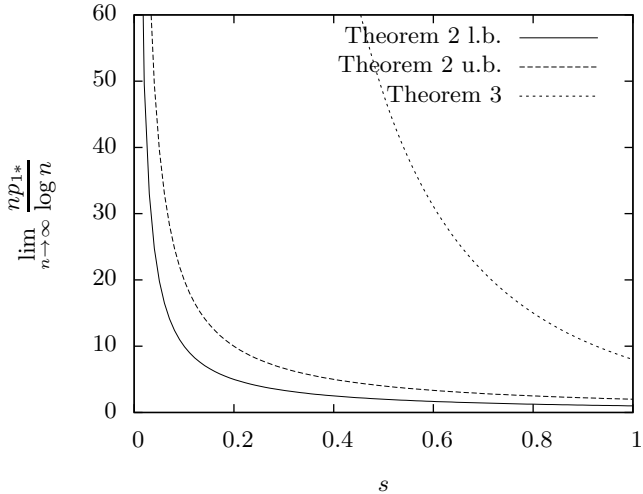}
\caption{This plot depicts the trade-off between correlation ($s = s_a = s_b$) and density ($p_{1*} = p_{*1}$) in the feasibility of exact graph matching. 
Exact recovery of the permutation is possible above the threshold given by the Theorem~\ref{thm:main} upper bound and impossible below the threshold given by the Theorem~\ref{thm:main} lower bound. 
The upper bound of Pedarsani and Grossglauser \cite{pedarsani_privacy_2011}, Theorem~\ref{thm:pg}, is also plotted for comparison.
}
\end{figure}

\bcomment{
\begin{figure}
\centering
\input{plot-s-ssr.tex}
\caption{
}
\end{figure}}

Our main achievability result applies for all regimes of $\bfp$.

\begin{theorem}
\label{thm:general-achievability}
Let $(G_a,G_b) \sim ER(n,\bfp)$, where $\bfp$ can depend on $n$.
If 
\begin{equation*}
\left(\sqrt{p_{11}p_{00}} - \sqrt{p_{01}p_{10}}\right)^2 \geq 2\frac{\log n + \omega(1)}{n},
\end{equation*}
then there is a deanonymizer that succeeds with probability $1-o(1)$.
\end{theorem}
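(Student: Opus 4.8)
The plan is to show that the maximum-likelihood estimator succeeds. Because the prior on $\Pi$ is uniform, the ML estimator $\hat{\Pi} = \arg\max_{\pi}\Pr[\text{data}\mid\Pi=\pi]$ is also the MAP estimator, and it fails only on the event that some $\pi\neq\Pi$ has likelihood $L(\pi)$ at least as large as $L(\Pi)$; on the complementary event $\Pi$ is the unique maximizer and $\hat{\Pi}=\Pi$. So I would take a union bound over the competing permutations, and for a single $\pi\neq\Pi$ control the pairwise error probability by a Bhattacharyya coefficient: applying the Chernoff bound at parameter $\tfrac12$ to the log-likelihood ratio,
\[
\Pr_{\Pi}\big[L(\pi)\geq L(\Pi)\big] \;\leq\; \E_{\Pi}\!\left[\sqrt{L(\pi)/L(\Pi)}\right] \;=\; \sum_{z}\sqrt{P_{\Pi}(z)\,P_{\pi}(z)} \;=:\; BC(P_{\Pi},P_{\pi}),
\]
where $P_{\pi}$ denotes the law of the observed pair of graphs under $\Pi=\pi$.

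The heart of the argument is evaluating $BC(P_{\Pi},P_{\pi})$. Writing the observation as $(x,y)$ with $x$ the relabeled copy of $G_a$ and $y=G_b$, one has $P_{\pi}(x,y)=\prod_{h}p_{\,x(h),\,y(l(\pi)(h))}$, a product over $\binom{[n]}{2}$. Re-indexing the summation variable $y\mapsto y\circ l(\Pi)$ shows that $BC(P_{\Pi},P_{\pi})$ depends only on the relative lifted permutation $\nu=l(\Pi^{-1}\pi)$, and that it factors over the orbits of $\nu$ acting on $\binom{[n]}{2}$: each fixed point contributes a factor $1$, and each $\ell$-cycle contributes $\mathrm{tr}(M^{\ell})$, where $M$ is the $2\times2$ Gram matrix $M_{a,b}=\sum_{c\in\{0,1\}}\sqrt{p_{c,a}\,p_{c,b}}$. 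A short computation gives $\mathrm{tr}\,M=p_{*0}+p_{*1}=1$ and $\det M=\big(\sqrt{p_{11}p_{00}}-\sqrt{p_{01}p_{10}}\big)^{2}=:\alpha$; since $M$ is positive semidefinite its eigenvalues are $\lambda_{\pm}=\tfrac12(1\pm\sqrt{1-4\alpha})\in[0,1]$, so for every $\ell\geq2$,
\[
\mathrm{tr}(M^{\ell}) \;=\; \lambda_{+}^{\ell}+\lambda_{-}^{\ell} \;\leq\; \big(\lambda_{+}^{2}+\lambda_{-}^{2}\big)^{\ell/2} \;=\; (1-2\alpha)^{\ell/2} \;\leq\; e^{-\alpha\ell},
\]
using monotonicity of $\ell_p$ norms. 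Multiplying over the cycles of $\nu$ yields $BC(P_{\Pi},P_{\pi})\leq e^{-\alpha\,|\mathrm{supp}\,\nu|}$, and if $\Pi^{-1}\pi$ displaces $k$ vertices then $|\mathrm{supp}\,\nu|=\binom{n}{2}-\binom{n-k}{2}=\tfrac12 k(2n-k-1)\geq\tfrac12 k(n-1)$.

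To close, group the competing permutations by the number $k\in\{2,\dots,n\}$ of vertices displaced relative to $\Pi$; there are at most $n^{k}$ of them, so
\[
\Pr[\hat{\Pi}\neq\Pi] \;\leq\; \sum_{k=2}^{n} n^{k}\,e^{-\alpha k(n-1)/2} \;=\; \sum_{k=2}^{n}\exp\!\Big(k\big(\log n-\tfrac{\alpha(n-1)}{2}\big)\Big).
\]
The hypothesis $\alpha\geq 2(\log n+\omega(1))/n$ forces $\tfrac{\alpha(n-1)}{2}-\log n\geq \omega(1)-\tfrac{\log n+\omega(1)}{n}\to\infty$, so the bracket tends to $-\infty$ and the geometric series is $o(1)$. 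The step I expect to be the real obstacle is the Bhattacharyya computation: getting the bookkeeping of which coordinates are shared between the two product measures exactly right so that the coefficient genuinely factors as $\prod_{\text{cycles}}\mathrm{tr}(M^{|O|})$, and then extracting the clean bound $\mathrm{tr}(M^{\ell})\leq e^{-\alpha\ell}$; the union bound and the remaining estimates are routine.
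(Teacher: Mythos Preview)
Your argument is correct and lands on exactly the same per-permutation bound $(1-2q)^{(N-c_1)/2}$ as the paper, but by a considerably shorter route. The paper first conditions on $G_a$ (Lemma~\ref{lemma:conditioning}), applies a Chernoff bound to the conditional law of $d(\pi)$ with the optimized parameter $z^{*}=(p_{01}p_{10}/p_{00}p_{11})^{1/4}$ (Lemma~\ref{lemma:tail-bounds}), and then has to average the resulting bound over $G_a$; that averaging is where all the work lies, requiring three combinatorial lemmas about runs in cyclic binary sequences (Lemmas~\ref{lemma:bijection}--\ref{lemma:cycle} and Theorem~\ref{thm:Kgf}). Your Bhattacharyya computation bypasses this entirely: taking $\lambda=\tfrac12$ in the Chernoff bound on the \emph{joint} law and summing out $x$ first produces $\prod_{\text{cycles}}\mathrm{tr}(M^{\ell})$ directly, and the paper's generating-function machinery collapses into the $2\times2$ eigenvalue computation $\mathrm{tr}\,M=1$, $\det M=q$. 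The two arguments rejoin at the same $\ell_p$-norm step $\lambda_{+}^{\ell}+\lambda_{-}^{\ell}\leq(\lambda_{+}^{2}+\lambda_{-}^{2})^{\ell/2}=(1-2q)^{\ell/2}$. In principle you give up the freedom to optimize the Chernoff parameter, but here $\lambda=\tfrac12$ already coincides with the paper's optimal $z^{*}$ (since the log-likelihood ratio is $d(\pi)\log z^{*}$), so nothing is lost. The factorization you flagged as the likely obstacle does go through exactly as you wrote it.

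One small slip to fix: the equality $|\mathrm{supp}\,\nu|=\binom{n}{2}-\binom{n-k}{2}$ is not correct in general. If $\Pi^{-1}\pi$ contains a transposition $(i\ j)$, then the pair $\{i,j\}$ is fixed by $\nu=l(\Pi^{-1}\pi)$ even though both $i$ and $j$ are displaced. Hence one only has
\[
|\mathrm{supp}\,\nu|\ \geq\ \binom{n}{2}-\binom{n-k}{2}-\Big\lfloor\frac{k}{2}\Big\rfloor\ =\ \frac{k(2n-k-2)}{2}\ \geq\ \frac{k(n-2)}{2},
\]
which is the bound the paper uses. This changes $(n-1)$ to $(n-2)$ in your final geometric sum and does not affect the conclusion.
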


Recall that in the sparse regime $p_{00} \to 1$ and in the significant correlation regime $\frac{p_{11}p_{00}}{p_{10}p_{01}} \to \infty$.
Here, our achievability is nearly tight.
\begin{theorem}
\label{thm:main}
Let $(G_a,G_b) \sim ER(n,\bfp)$ where $p_{00} \to 1$ and $\frac{p_{11}p_{00}}{p_{10}p_{01}} \to \infty$.
If $p_{11} \geq 2\frac{\log n + \omega(1)}{n}$, then there is a deanonymizer that succeeds with probability $1-o(1)$.
If $p_{11} \leq \frac{\log n - \omega(1)}{n}$, then any deanonymizer succeeds with probability $o(1)$.
\end{theorem}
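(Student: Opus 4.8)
The achievability half is, at bottom, a corollary of Theorem~\ref{thm:general-achievability}. I would simply check that its hypothesis follows from the hypotheses of Theorem~\ref{thm:main}: since $p_{00}\to 1$ and $\frac{p_{11}p_{00}}{p_{10}p_{01}}\to\infty$ force $p_{01}p_{10}=o(p_{11})$, we get $\sqrt{p_{01}p_{10}}=o(\sqrt{p_{11}})$ and $\sqrt{p_{11}p_{00}}=(1+o(1))\sqrt{p_{11}}$, hence $\left(\sqrt{p_{11}p_{00}}-\sqrt{p_{01}p_{10}}\right)^2=(1+o(1))p_{11}$. So $p_{11}\ge 2\frac{\log n+\omega(1)}{n}$ yields $\left(\sqrt{p_{11}p_{00}}-\sqrt{p_{01}p_{10}}\right)^2\ge 2\frac{\log n+\omega(1)}{n}$, and Theorem~\ref{thm:general-achievability} supplies the deanonymizer.

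For the converse the plan is a second-moment impossibility argument. By optimality of the MAP deanonymizer it suffices to upper bound $\E\big[\max_\pi P(\Pi=\pi\mid X,Y)\big]$, where $X=G_a\circ l(\Pi)$ and $Y=G_b$. The posterior factorizes: $P(\Pi=\pi\mid X,Y)\propto\prod_e p_{(X\circ l(\pi)^{-1})(e),\,Y(e)}$. The structural observation is that for a transposition $\tau=(i\,j)$, the likelihoods of $\pi$ and $\pi\tau$ differ only in the factors from edges incident to exactly one of $i,j$, and these cancel in pairs unless some vertex $w$ has $(X\circ l(\pi)^{-1})(\{i,w\})\ne(X\circ l(\pi)^{-1})(\{j,w\})$ and $Y(\{i,w\})\ne Y(\{j,w\})$; call $\{i,j\}$ confusable when no such $w$ exists. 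Each confusable pair yields a distinct permutation with exactly the same likelihood, so summing these $1+N$ permutations into the posterior normalizer gives $\max_\pi P(\Pi=\pi\mid X,Y)\le 1/(1+N)$, where $N$ is the number of confusable pairs (with respect to an appropriate reference permutation; see below). Since $1/(1+N)\le 1$, it then remains to show $N\to\infty$ in probability and invoke dominated convergence.

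I would analyze $N$ by moments. For the planted permutation the de-permuted graph is $G_a$ itself, so $\{i,j\}$ is confusable iff for every $w$ either $G_a(\{i,w\})=G_a(\{j,w\})$ or $G_b(\{i,w\})=G_b(\{j,w\})$; the events over distinct $w$ are independent, each holding with probability $q=1-2(p_{00}p_{11}+p_{01}p_{10})$, so $\E[N]=\binom{n}{2}q^{\,n-2}$. Under the stated regime $q=1-(2+o(1))p_{11}$, and $\E[N]=\exp\!\big(2\log n-(2+o(1))np_{11}+O(1)\big)\to\infty$ exactly when $p_{11}\le\frac{\log n-\omega(1)}{n}$. (Morally, a confusable pair is one both of whose endpoints are isolated in the intersection graph $G_a\cap G_b\sim ER(n,p_{11})$, up to lower-order corrections; this is why $p_{11}$ is the relevant parameter.) For the second moment I would bound the covariance of the confusability indicators of two pairs: disjoint pairs share only the four edges between them and contribute negligibly, while pairs sharing one vertex require the sharper estimate that two overlapping confusability events co-occur with probability $\approx(1-3\alpha)^{\,n}$ against a product $\approx(1-2\alpha)^{2n}$ (with $\alpha=p_{00}p_{11}+p_{01}p_{10}$), a ratio $\approx e^{np_{11}}=o(n)$ precisely when $np_{11}\le\log n-\omega(1)$, which makes the vertex-sharing contribution $o(\E[N]^2)$. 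Chebyshev then gives $N\to\infty$ in probability.

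The main obstacle is twofold. First, the bound $\max_\pi P(\Pi=\pi\mid X,Y)\le 1/(1+N)$ is cleanest when $N$ counts confusable pairs at a likelihood-\emph{maximizing} de-permutation rather than the planted one; reconciling this needs either showing the planted permutation is itself a maximizer with high probability in this regime, or controlling the likelihood ratio between the MAP reconstruction and the truth (essentially the large-deviation estimate behind Theorem~\ref{thm:general-achievability}, run in reverse), or a symmetry argument transferring the count. Second, the variance computation for vertex-sharing pairs is where the "sparse" and "significantly correlated" hypotheses genuinely enter: the lower-order terms $1-p_{00}$ and $p_{01}p_{10}$ must be kept subordinate to $p_{11}$ throughout, and it is exactly this that pins the analysis to the stated regime. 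A Chen--Stein Poisson approximation of $N$ is an alternative route to the concentration step, with comparable bookkeeping.
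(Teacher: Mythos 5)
The achievability reduction to Theorem~\ref{thm:general-achievability} is exactly the paper's. The converse, however, has a gap that goes beyond the one you flag: counting only transpositions whose likelihood \emph{exactly} equals the identity's (your ``confusable pairs'') discards permutations with strictly higher likelihood, and within the stated regime those can dominate. Your per-neighbor survival event — for every $w$, not both $G_a(\{i,w\})\neq G_a(\{j,w\})$ and $G_b(\{i,w\})\neq G_b(\{j,w\})$ — excludes the pattern $(G_a,G_b)(\{i,w\})=(1,0)$, $(G_a,G_b)(\{j,w\})=(0,1)$ even though swapping $i,j$ there \emph{raises} the likelihood (it trades $p_{10}p_{01}$ for $p_{00}p_{11}$). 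So your per-$w$ survival probability $1-2(p_{00}p_{11}+p_{01}p_{10})$ can be much smaller than $(1-p_{11})^2$, the analogous quantity for the paper's count (both $\{i,w\},\{j,w\}\notin G_a\cap G_b$). Try $p_{11}=(\log n-\log\log n)/n$, $p_{10}=1/\log\log n$, $p_{01}=(\log n)\sqrt{\log\log n}/n$: then $p_{00}\to 1$, $\frac{p_{11}p_{00}}{p_{10}p_{01}}=\Theta(\sqrt{\log\log n})\to\infty$, and $p_{11}\le(\log n-\omega(1))/n$, yet $np_{01}p_{10}=\Theta(\log n/\sqrt{\log\log n})\to\infty$ overwhelms the slack $\log\log n$, so $\E[N]=\binom{n}{2}(1-2(p_{00}p_{11}+p_{01}p_{10}))^{n-2}\to 0$, while the paper's isolated-vertex count $n(1-p_{11})^{n-1}\sim\log n\to\infty$. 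Your claim that $\E[N]\to\infty$ ``exactly when $p_{11}\le(\log n-\omega(1))/n$'' hides $np_{01}p_{10}$ and $np_{11}(1-p_{00})$ in a multiplicative $o(1)$ on the exponent, but these are additive corrections to $np_{11}\approx\log n$ that need not be $o(1)$ and can exceed the $\omega(1)$ slack.

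The fix is exactly Lemma~\ref{lemma:intersection}: count permutations (even just transpositions) with $d(\pi,G_a,G_b)\le 0$ rather than $d(\pi)=0$. Every $\pi\in Aut(G_a\cap G_b)$ qualifies, the relevant graph is $G_a\cap G_b\sim ER(n,p_{11})$, and its isolated vertices give a count that the second-moment method handles cleanly (Lemma~\ref{lemma:isolated}). This also dissolves your first obstacle: you do not need the planted permutation to be a posterior maximizer (it often isn't). If it is not a maximizer, the MAP estimator already fails; if it is, every $d\le 0$ permutation ties for the maximum and lands in the optimizer set $\mathcal M$, so $P[\text{success}]\le\E[\one_{I\in\mathcal M}/|\mathcal M|]\le\E[1/|Aut(G_a\cap G_b)|]$. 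That observation is what the paper's terse converse proof relies on, and it is the piece your sketch is missing.
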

\begin{proof}[of Theorem~\ref{thm:main}, achievability part]
This follows from Theorem~\ref{thm:general-achievability} and 
\begin{equation*}
\left(\sqrt{p_{11}p_{00}} - \sqrt{p_{01}p_{10}}\right)^2 = p_{11}p_{00}\left(1 - \sqrt{\frac{p_{10}p_{01}}{p_{11}p_{00}}}\right)^2 \to p_{11}.
\end{equation*}
\end{proof}

The achievability half of Theorem~\ref{thm:main} improves on a previous result by Pedarsani and Grossglauser \cite{pedarsani_privacy_2011}], which we restate here using our notation.
\begin{theorem}[Pedarsani and Grossglauser \cite{pedarsani_privacy_2011}]
\label{thm:pg}
Let $(G_a,G_b) \sim ER(n,\bfp)$ where $p_{10}~=~p_{01}$, $p_{00}~\to~1$, $\frac{p_{11}p_{00}}{p_{10}p_{01}}~\to~\infty$ and $\frac{p_{11}}{p_{1*}} = \omega(1/n)$.
If $\frac{p_{11}^2}{p_{10} + p_{01} + p_{11}} \geq 8\frac{\log n + \omega(1)}{n}$, then there is a deanonymizer that succeeds with probability $1-o(1)$.
\end{theorem}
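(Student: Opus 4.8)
The cleanest route is to observe that Theorem~\ref{thm:pg} is subsumed by the achievability half of Theorem~\ref{thm:main}. Since $p_{10}+p_{01}+p_{11}\ge p_{11}$ we have $\frac{p_{11}^2}{p_{10}+p_{01}+p_{11}}\le p_{11}$, so the hypothesis $\frac{p_{11}^2}{p_{10}+p_{01}+p_{11}}\ge 8\frac{\log n+\omega(1)}{n}$ forces $p_{11}\ge 8\frac{\log n+\omega(1)}{n}\ge 2\frac{\log n+\omega(1)}{n}$; together with $p_{00}\to 1$ and $\frac{p_{11}p_{00}}{p_{10}p_{01}}\to\infty$ this is exactly the hypothesis of Theorem~\ref{thm:main} (the extra conditions $p_{10}=p_{01}$ and $\frac{p_{11}}{p_{1*}}=\omega(1/n)$ become superfluous). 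So the conclusion follows verbatim, and I would present it this way to make explicit that our bound dominates theirs.

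It is still worth recording the direct argument, which is in essence that of \cite{pedarsani_privacy_2011} and pinpoints where their constant is lost. One uses the maximum-likelihood estimator, which in this model is the permutation $\hat\Pi$ maximizing the overlap $N_{11}(\pi):=|E(G_a\circ l(\pi))\cap E(G_b)|$: the log-likelihood of the hypothesis $\Pi=\pi$ equals $N_{11}(\pi)\log\frac{p_{11}p_{00}}{p_{10}p_{01}}$ plus a quantity depending only on the ($\pi$-invariant) edge counts $|E(G_a)|,|E(G_b)|$, and $\frac{p_{11}p_{00}}{p_{10}p_{01}}>1$ by positive correlation. By the symmetry of the model one conditions on $\Pi=\mathrm{id}$ and bounds $P\big(\exists\,\pi\ne\mathrm{id}:N_{11}(\pi)\ge N_{11}(\mathrm{id})\big)$ by a union over $\pi$, grouped by the number $k$ of moved vertices; there are at most $n^k$ such $\pi$. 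For fixed $\pi$, write $\Delta(\pi)=N_{11}(\mathrm{id})-N_{11}(\pi)=\sum_{e:\,l(\pi)e\ne e}G_b(e)\big(G_a(e)-G_a(l(\pi)e)\big)$; the moved edges break into the orbits of $l(\pi)$, each orbit-sum involves only the edges of that orbit so the orbit-sums are mutually independent, each has mean $(p_{11}p_{00}-p_{10}p_{01})\times(\text{orbit length})$, hence $\E[\Delta(\pi)]=m(\pi)(p_{11}p_{00}-p_{10}p_{01})\approx m(\pi)\,p_{11}$ where $m(\pi)$ is the number of moved edges. A Chernoff bound gives $P(\Delta(\pi)\le 0)\le\exp\big(-c\,m(\pi)\,p_{11}\big)$, and feeding this into the union bound yields the claimed $1-o(1)$ success probability once $n\,p_{11}$ exceeds a constant multiple of $\log n$.

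Two places determine the constant, and the second is the real obstacle. First, the tail bound: controlling each orbit-sum crudely — e.g. Bernstein with the loose variance proxy $\Theta(\ell\,p_{1*})$ rather than the true $\Theta(\ell\,p_{11})$ for a length-$\ell$ orbit — costs both the factor $8$ and a factor $s=p_{11}/p_{1*}$, whereas minimizing the orbit moment generating functions exactly (a transfer-matrix computation for the long orbits) replaces $c\,m(\pi)\,p_{11}$ by the sharp $2\,m(\pi)\,(\sqrt{p_{11}p_{00}}-\sqrt{p_{10}p_{01}})^2$, which is what Theorem~\ref{thm:general-achievability} does. Second, the union bound is governed not by transpositions (for which $m(\pi)\approx 2n$ is large compared to the $\binom{n}{2}$ such permutations) but by permutations moving a linear number of vertices: there are $n^{\Theta(n)}$ of them and for such $\pi$ almost every edge is moved, so $m(\pi)\approx\binom{n}{2}$ and the requirement becomes $\binom{n}{2}(\sqrt{p_{11}p_{00}}-\sqrt{p_{10}p_{01}})^2\gtrsim n\log n$, i.e. $(\sqrt{p_{11}p_{00}}-\sqrt{p_{10}p_{01}})^2\gtrsim 2\frac{\log n}{n}$ — this is exactly where the factor $2$ in Theorem~\ref{thm:general-achievability} originates, and a lossy estimate of $m(\pi)$ for these permutations accounts for the remainder of the gap with \cite{pedarsani_privacy_2011}.
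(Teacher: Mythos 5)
The paper does not prove Theorem~\ref{thm:pg} --- it is attributed to and cited from Pedarsani and Grossglauser~\cite{pedarsani_privacy_2011} and simply restated in the present notation, so there is no in-paper proof to match against. Your first observation is nevertheless a correct verification: since $p_{10}+p_{01}+p_{11}\ge p_{11}$, the PG hypothesis forces $p_{11}\ge 8\tfrac{\log n+\omega(1)}{n}\ge 2\tfrac{\log n+\omega(1)}{n}$, and together with $p_{00}\to 1$ and $\tfrac{p_{11}p_{00}}{p_{10}p_{01}}\to\infty$ this places you inside the hypothesis set of the achievability half of Theorem~\ref{thm:main}, with the remaining PG conditions ($p_{10}=p_{01}$, $\tfrac{p_{11}}{p_{1*}}=\omega(1/n)$) unused. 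This is exactly the sense in which Theorem~\ref{thm:main} improves on Theorem~\ref{thm:pg}, and the paper already says so explicitly just after the theorem statement, quantifying the improvement as the factor $\tfrac{4(p_{10}+p_{01}+p_{11})}{p_{11}}=\tfrac{4(2-s)}{s}$. Be aware, though, that ``derive it from Theorem~\ref{thm:main}'' is not a proof of the PG result in any self-standing sense: it establishes that the weaker conclusion holds under the PG hypotheses given the paper's new machinery, which is a domination remark, not an independent argument. As a statement to put in the paper it is fine; as a proof of \cite{pedarsani_privacy_2011} it would be circular in its original context.

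Your direct sketch is an accurate account of what the PG-style argument does and the accounting for where the constants are lost is essentially right. The identity $\E[\Delta(\pi)]=m(\pi)(p_{11}p_{00}-p_{10}p_{01})$ matches Lemma~\ref{lemma:mean} (with $d(\pi)=2\Delta(\pi)$ and $m(\pi)=N-c_1$), the per-edge variance is indeed $\Theta(p_{11})$ in the sparse significantly-correlated regime rather than $\Theta(p_{1*})$, and the bottleneck in the union bound is permutations moving $\Theta(n)$ vertices, for which $m(\pi)\approx\binom{n}{2}$ --- matching the derivation in Section~\ref{subsection:union}, where the worst case of inequality~\eqref{eq:c-bound} over $m$ occurs at $m=n$ and yields the factor $2$ in Theorem~\ref{thm:general-achievability}. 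The one thing I would tighten is the Chernoff/Bernstein comparison: the paper does not merely plug in the right variance proxy, it computes the exact conditional generating function $D_{r,G_b}(z)$ (Lemma~\ref{lemma:conditioning}) and optimizes $z$ exactly, then controls $R_\pi$ combinatorially via Theorem~\ref{thm:Kgf}; the resulting exponent $q=(\sqrt{p_{11}p_{00}}-\sqrt{p_{10}p_{01}})^2$ is the sharp Chernoff rate, which is stronger than what a generic second-moment Bernstein bound with the correct variance would give. That distinction is worth keeping if you intend the remark to explain the full factor lost by~\cite{pedarsani_privacy_2011}.
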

The bounds in Theorems~\ref{thm:main} and \ref{thm:pg} are plotted in Figure~\ref{fig:pos}.

Theorem~\ref{thm:pg} applies to the symmetric case $p_{10}~=~p_{01}$ (and $s = s_a = s_b$).
In this case we have reduced the achievability threshold by a factor of $\frac{4(p_{10} + p_{01} + p_{11})}{p_{11}} = \frac{4(2-s)}{s}$.
This improvement becomes more significant as the graphs $G_a$ and $G_b$ become less correlated and $s$ decreases.
Additionally, the gap between the achievability and converse threshold has been reduced to a factor of 2 throughout this regime.

\subsection{Perfect Correlation Limit}
In the perfect correlation limit, i.e. $s=1$, we have $G_a = G_b$.
In this case, the size of the automorphism group of $G_a$ determines whether it is possible to recover the permutation applied to $G_a$.
This is because the composition of an automorphism with the true matching gives another matching with no errors.
Whenever the automorphism group of $G_a$ is nontrivial, it is impossible to exactly recover the permutation with high probability.
We will return to this idea in Section~\ref{section:converse} in the proof of the converse part of Theorem~\ref{thm:main}.
Wright established that for $\frac{\log n + \omega(1)}{n} \leq p \leq 1 - \frac{\log n + \omega(1)}{n}$, the automorphism group of $G \sim ER(n,p)$ is trivial with probability $1 - o(1)$ and that elsewhere, it is nontrivial with probability $1-o(1)$\cite{wright_graphs_1971}.
In fact, he proved a somewhat stronger statement about the growth rate of the number of unlabeled graphs that implies this fact about automorphism groups.

Thus for $s=1$, the converse part of Theorem~\ref{thm:main} is tight and the achievability part is off by a factor of two.
We conjecture that the converse is tight for all $s$.

Bollob\'{a}s later provided a more combinatorial proof of this automorphism group threshold function \cite{bollobas_random_1998}.
The methods we use are closer to those of Bollob\'{a}s.

\TODO{replace
\begin{itemize}
\item Explain basic strategies of both proofs
\item fix a permutation, look at random graphs
\end{itemize}}

\subsection{MAP Estimation}
The graph deanonymization problem is a statistical estimation problem.
The Maximum a Posteriori (MAP) estimator minimizes the probability of error.
The structure of the MAP estimator informs both our achievability and converse bounds.
Hence if the MAP estimator does not recover the true permutation with high probability, then no other estimator can succeed.
Note that because the permutations used to anonymize $G_a$ are equiprobable, the MAP estimator is same as the Maximum Likelihood estimator.

For two graphs on $[n]$, $G$ and $H$, let $G \cup H$ be the graph with edge set $E(G) \cup E(H)$ and let $G \cap H$ be the graph with edge set $E(G) \cap E(H)$.
Define the size of the symmetric difference of the edge sets of $G$ and $H$ as
\begin{IEEEeqnarray*}{rCl}
\Delta(G,H) &=& |E(G \cup H)| - |E(G \cap H)|\\
&=& \sum_{e \in \binom{[n]}{2}} |G(e) - H(e)|\\
\end{IEEEeqnarray*}
which is also the Hamming distance between the edge indicator vectors of $G$ and $H$. 

The MAP estimator for this problem can be derived as follows.
In the following lemma we will be careful to distinguish graph-valued random variables from fixed graphs.
Thus we name the former with upper-case letters and the latter with lower-case.
\begin{lemma}
\label{lemma:posterior}
Let $(G_a,G_b) \sim ER(n,\bfp)$, let $\Pi$ be a uniformly random permutation of $[n]$, and let $G_c = G_a \circ l(\Pi)$.
Let $k = \frac{1}{2}\Delta(g_c \circ l(\pi)^{-1}, g_b)$.
Then 
\begin{equation*}
P[\Pi = \pi|(G_c,G_b) = (g_c,g_b)] \propto \left(\frac{p_{10} p_{01}}{p_{11} p_{00}}\right)^{k}.
\end{equation*}
\end{lemma}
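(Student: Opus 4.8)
The plan is to apply Bayes' rule and then reduce the likelihood to something that depends on the data only through a single count. Since $\Pi$ is uniform on $S_{[n]}$ and independent of $(G_a,G_b)$, the prior $P[\Pi = \pi]$ is constant in $\pi$, so
\[
P[\Pi = \pi \mid (G_c,G_b) = (g_c,g_b)] \propto P[(G_c,G_b) = (g_c,g_b) \mid \Pi = \pi].
\]
On the event $\{\Pi = \pi\}$ we have $G_c = G_a \circ l(\pi)$, so $\{G_c = g_c\}$ is equivalent to $\{G_a = g_c \circ l(\pi)^{-1}\}$; writing $g_a = g_c \circ l(\pi)^{-1}$ and using independence of $\Pi$ from $(G_a,G_b)$, the likelihood equals $P[(G_a,G_b) = (g_a,g_b)]$.

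Next I would use the edge-wise independence of the $ER(n,\bfp)$ model. Letting $n_{ij}$ be the number of $e \in \binom{[n]}{2}$ with $g_a(e) = i$ and $g_b(e) = j$ for $i,j \in \{0,1\}$, independence gives
\[
P[(G_a,G_b) = (g_a,g_b)] = p_{11}^{n_{11}} p_{10}^{n_{10}} p_{01}^{n_{01}} p_{00}^{n_{00}}.
\]
The key observation is that $l(\pi)$ is a bijection of $\binom{[n]}{2}$, so $g_a = g_c \circ l(\pi)^{-1}$ has the same number of edges as $g_c$. Hence $n_{11} + n_{10} = |E(g_c)|$ and $n_{11} + n_{01} = |E(g_b)|$ are both independent of $\pi$, and of course $n_{11} + n_{10} + n_{01} + n_{00} = N$. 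Thus all four counts are affine functions of $n_{11}$ with coefficients that do not depend on $\pi$, and substituting $n_{10} = |E(g_c)| - n_{11}$, $n_{01} = |E(g_b)| - n_{11}$, $n_{00} = N - |E(g_c)| - |E(g_b)| + n_{11}$ factors the product above as a $\pi$-independent constant times $\left(\tfrac{p_{11}p_{00}}{p_{10}p_{01}}\right)^{n_{11}}$.

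Finally I would rewrite $n_{11}$ in terms of $k$. Since $\Delta(g_a,g_b) = \sum_e |g_a(e) - g_b(e)| = n_{10} + n_{01} = |E(g_c)| + |E(g_b)| - 2n_{11}$, we have $n_{11} = \tfrac12\big(|E(g_c)| + |E(g_b)|\big) - k$; the resulting factor $\left(\tfrac{p_{11}p_{00}}{p_{10}p_{01}}\right)^{(|E(g_c)|+|E(g_b)|)/2}$ is again constant in $\pi$, leaving $P[\Pi = \pi \mid (G_c,G_b) = (g_c,g_b)] \propto \left(\tfrac{p_{10}p_{01}}{p_{11}p_{00}}\right)^{k}$, as claimed. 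I do not expect a real obstacle here: the only points requiring care are the direction of the group action (which of $l(\pi)$, $l(\pi)^{-1}$ acts on which graph) and the one genuine idea, namely that relabeling by $l(\pi)$ preserves the marginal edge counts of $g_c$, so the joint type $(n_{11},n_{10},n_{01},n_{00})$ has a single degree of freedom as $\pi$ ranges over $S_{[n]}$.
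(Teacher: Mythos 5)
Your proposal is correct and follows essentially the same route as the paper: Bayes' rule plus uniformity of $\Pi$ to reduce to the likelihood, the change of variables $g_a = g_c \circ l(\pi)^{-1}$ using independence of $(G_a,G_b)$ from $\Pi$, and then the observation that the joint type $(n_{11},n_{10},n_{01},n_{00})$ is determined by the Hamming distance $\Delta(g_a,g_b)$ once the $\pi$-invariant marginal edge counts $|E(g_c)|$ and $|E(g_b)|$ are fixed. The paper writes all four counts directly in terms of $k$ while you parameterize by $n_{11}$ and convert at the end, but these are the same computation.
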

\begin{proof}
We compute the posterior probability as follows:
\begin{IEEEeqnarray*}{Cl}
          & P[\Pi = \pi | (G_c, G_b) = (g_c,g_b)] \\
\eqa{a}   & \frac{P[(G_c, G_b) = (g_c,g_b) | \Pi = \pi]P[\Pi = \pi]}
              {P[(G_c, G_b) = (g_c,g_b)]} \\
\propa{b} & P[(G_c, G_b) = (g_c,g_b) | \Pi = \pi] \\
\eqa{c}   & P[(G_a, G_b) = (g_c \circ l(\pi)^{-1},g_b) | \Pi = \pi] \\
\eqa{d}   & P[(G_a, G_b) = (g_c \circ l(\pi)^{-1},g_b)]
\end{IEEEeqnarray*}
where the constant of proportionality does not depend on $\pi$.
Here we have applied Bayes rule in $(a)$, the uniformity of $\Pi$ in $(b)$, the relationship between $G_a$, $G_c$ and $\Pi$ in $(c)$, and the independence of $(G_a,G_b)$ from $\Pi$ $(d)$.

Let $g_a = g_c \circ l(\pi)^{-1}$, $m_a = |E(g_a)| = |E(g_c)|$, and $m_b = |E(g_b)|$.
Then
\begin{IEEEeqnarray*}{rCl}
|\{e : (g_a,g_b)(e) = (1,1)\}| &=& \frac{m_a + m_b}{2} - k \\
|\{e : (g_a,g_b)(e) = (1,0)\}| &=& \frac{m_a - m_b}{2} + k \\
|\{e : (g_a,g_b)(e) = (0,1)\}| &=& \frac{m_b - m_a}{2} + k \\
|\{e : (g_a,g_b)(e) = (0,0)\}| &=& N - \frac{m_a + m_b}{2} - k .
\end{IEEEeqnarray*}
From the definition of the distribution of $(G_a,G_b)$, we have 
\begin{IEEEeqnarray*}{Cl}
 & P[ (G_a, G_b) = (g_a, g_b) ]\\
=& p_{11}^{\frac{m_a +m_b}{2} - k} p_{10}^{\frac{m_a - m_b}{2} + k} p_{01}^{\frac{m_b - m_a}{2} + k} p_{00}^{N - \frac{m_a + m_b}{2} - k}\\ 
\propto& \left(\frac{p_{10} p_{01}}{p_{11} p_{00}}\right)^{k}  
\end{IEEEeqnarray*}
where we have kept the factors that depends on $k$ and dropped the constant of proportionality that depends only on $m_a$ and $m_b$.
\end{proof}

Thus the entries of posterior distribution, $P[\Pi = \pi| (G_c,G_b) =(g_c,g_b)]$, depend monotonically on $\Delta(g_a,g_b)$.
If we fix any randomized estimation procedure, then the estimator $\hat{\Pi}$ is a random variable.
It will be more convenient to work with the random permutation $\hat{\Pi} \circ \Pi^{-1}$ rather than $\hat{\Pi}$ directly.
The estimator is correct when $\hat{\Pi} \circ \Pi^{-1} = I$, the identity permutation.
In fact, it is easy to see that $\hat{\Pi} \circ \Pi^{-1}$ is independent of $\Pi$.
For fixed $g_a$ and $g_b$, any change in $\Pi$ results in a corresponding change in $\hat{\Pi}$.

From here on, we do not need to consider the graph $G_a \circ l(\Pi)$.
We can work directly with $G_a$ and $G_b$ and assume that $\pi = I$ is always the correct answer to the estimation.

The following quantity is central to both our converse and our achievability arguments (as well as the achievability proof of Pedarsani and Grossglauser \cite{pedarsani_privacy_2011}).
\begin{definition}
Define $d(\pi,G_a,G_b)$ (abbreviated $d(\pi)$) to be $\Delta(G_a \circ \sigma, G_b) - \Delta(G_a, G_b)$, where $\sigma = l(\pi)$. 
\end{definition}
This is the difference in matching quality between the permutation $\pi$ and the identity permutation.

\begin{lemma}
\label{lemma:mean}
Let $\pi$ be a permutation of $[n]$, let $\sigma = l(\pi)$, and let $c_1$ be the number of fixed points of $\sigma$.
If $(G_a,G_b) \sim ER(n,\bfp)$, then $\E[d(\pi)] = 2(N-c_1)(p_{00} p_{11} - p_{01} p_{10})$.
\end{lemma}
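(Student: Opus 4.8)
The plan is to compute $\E[d(\pi)] = \E[\Delta(G_a \circ \sigma, G_b)] - \E[\Delta(G_a, G_b)]$ by expressing each symmetric difference as a sum over vertex pairs $e \in \binom{[n]}{2}$ and using linearity of expectation. First I would write $\Delta(G_a \circ \sigma, G_b) = \sum_{e} |G_a(\sigma(e)) - G_b(e)|$ and $\Delta(G_a, G_b) = \sum_{e} |G_a(e) - G_b(e)|$, so that
\begin{equation*}
\E[d(\pi)] = \sum_{e \in \binom{[n]}{2}} \left( \E\bigl[\,|G_a(\sigma(e)) - G_b(e)|\,\bigr] - \E\bigl[\,|G_a(e) - G_b(e)|\,\bigr] \right).
\end{equation*}
The key observation is that the two terms inside the sum cancel exactly whenever $\sigma(e) = e$, i.e. when $e$ is a fixed point of $\sigma$; there are $c_1$ such pairs, so they contribute nothing and only the $N - c_1$ non-fixed pairs survive.

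For a non-fixed pair $e$, the term $\E[\,|G_a(e) - G_b(e)|\,] = p_{10} + p_{01}$, since $(G_a, G_b)(e)$ has the joint law in the model. For the other term, since $\sigma(e) \neq e$, the random variables $G_a(\sigma(e))$ and $G_b(e)$ are governed by \emph{distinct}, hence independent, coordinates of the i.i.d. family, so $G_a(\sigma(e)) \sim \mathrm{Bernoulli}(p_{1*})$ independently of $G_b(e) \sim \mathrm{Bernoulli}(p_{*1})$. Thus $\E[\,|G_a(\sigma(e)) - G_b(e)|\,] = p_{1*}p_{*0} + p_{0*}p_{*1}$. Subtracting, each of the $N - c_1$ surviving pairs contributes $(p_{1*}p_{*0} + p_{0*}p_{*1}) - (p_{10} + p_{01})$, and it remains to check this equals $2(p_{00}p_{11} - p_{01}p_{10})$ — a routine algebraic identity using $p_{1*} = p_{11}+p_{10}$, $p_{*1} = p_{11}+p_{01}$, $p_{0*} = p_{01}+p_{00}$, $p_{*0} = p_{10}+p_{00}$, and $p_{11}+p_{10}+p_{01}+p_{00}=1$. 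Summing over all $N - c_1$ pairs gives the claimed formula.

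The only real subtlety — and the step I would be most careful about — is the independence claim for non-fixed pairs: one must verify that $e \neq \sigma(e)$ genuinely implies the coordinates $G_a(\sigma(e))$ and $G_b(e)$ come from different elements of the i.i.d. family, which is immediate from the model definition since the family is indexed by $\binom{[n]}{2}$ and $\sigma$ acts as a bijection on that index set. Everything else is linearity of expectation plus the deterministic bookkeeping that fixed points cancel; no concentration or probabilistic estimates are needed here, as this lemma only concerns the mean (variance bounds, presumably needed for the achievability argument, would come later).
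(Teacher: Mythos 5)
Your proof is correct and follows essentially the same route as the paper: linearity of expectation over vertex pairs, cancellation at fixed points of $\sigma$, independence of $G_a(\sigma(e))$ and $G_b(e)$ when $\sigma(e) \neq e$ to compute $\E[|G_a(\sigma(e)) - G_b(e)|] = p_{1*}p_{*0} + p_{0*}p_{*1}$, and the algebraic identity reducing the per-pair contribution to $2(p_{00}p_{11} - p_{01}p_{10})$. The paper computes the per-pair difference of probabilities in a single step rather than as two separate expectations, but the substance and the key independence observation are identical.
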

\begin{proof}
Suppose $e \in \binom{[n]}{2}$ is not a fixed point of $\sigma$.
Then
\begin{IEEEeqnarray*}{Cl}
 & \E[|G_a(\sigma(e)) - G_b(e)| - |G_a(e) - G_b(e)|]\\
=& P[G_a(\sigma(e)) \neq G_b(e)] - P[G_a(e) \neq G_b(e)]\\
=& p_{*0} p_{1*} + p_{*1} p_{0*} - p_{01} - p_{10}\\
=& (p_{00} + p_{10}) (p_{10} + p_{11}) + (p_{01} + p_{11}) (p_{00} + p_{01})\\
 & - (p_{01} + p_{10})(p_{00} + p_{01} + p_{10} + p_{11})\\
=& 2 p_{00} p_{11} - 2 p_{01} p_{10} 
\end{IEEEeqnarray*}
and the value of $\E[d(\pi)]$ follows from linearity of expectation.
\end{proof}
Note that from Lemma~\ref{lemma:mean} the expected value of $d(\pi)$ is influenced by the number of trivial cycles (i.e. fixed points) of $\sigma$.
However, it does not depend on the distribution of lengths of the nontrivial cycles.
\begin{remark}
\label{remark}
Suppose that $\sigma = l(\pi)$ only contains cycles of length one and two.
Thus the only cycles that contribute positively to $d(\pi,G,H)$ are those containing $e,e' \in \binom{[n]}{2}$ such that $(G,H)(e) = (1,1)$ and $(G,H)(e') = (0,0)$.
The only cycles that contribute negatively to $d(\pi,G,H)$ are those containing $e,e' \in \binom{[n]}{2}$ such that $(G,H)(e) = (1,0)$ and $(G,H)(e') = (0,1)$.
\end{remark}

\section{Proof of Achievability}
\label{section:achievability}
Now we will prove the achievability half of Theorem~\ref{thm:main}.
From Lemma~\ref{lemma:posterior}, we know that the maximum a posteriori estimator is closely connected to the statistic $\Delta(G_a \circ \sigma, G_b)$.
This measures the quality of the matching produced by the permutation $\pi$.
We would like to show that with high probability, all non-identity permutations decrease the quality of the matching between $G_a$ and $G_b$.

Here is the basic strategy.
Throughout, we will analyze random graphs for some fixed permutation.
First, in Lemma~\ref{lemma:conditioning}, we will relate the distribution of $\Delta(G_a \circ \sigma, G_b)$ to $\Delta(G_a \circ \sigma, G_a)$.
In Section~\ref{subsection:cycles}, we will precisely analyze the distribution of $\Delta(G_a \circ \sigma, G_a)$.
This will allow us to obtain a tight bound on the probability that a particular permutation produces a better matching than the identity.
In Section~\ref{subsection:union}, we conclude the proof by applying the union bound over all permutations.

In will be convenient to name the following quantity from the statement of Theorem~\ref{thm:general-achievability}:
\begin{equation*}
q = \left(\sqrt{p_{11}p_{00}} - \sqrt{p_{10}p_{01}}\right)^2.
\end{equation*}

\begin{definition}
Let $R_{\pi}(z)$ be the generating function for the random variable $r = \frac{1}{2}\Delta(G_a, G_a \circ \sigma)$ where $G_a \sim ER(n,p_{1*})$:
\begin{equation*}
R_{\pi}(z) = \sum_r P[\Delta(G_a, G_a \circ \sigma) = 2r] z^r.
\end{equation*}
\end{definition}

\begin{lemma}
\label{lemma:conditioning}
Let $(G_a,G_b) \sim ER(n,\bfp)$, let $\pi$ be a permutation of $[n]$, and let $2r = \Delta(G_a, G_a \circ \sigma)$.
Conditioned on $r$, $d(\pi,G_a,G_b)$ has the generating function
\begin{equation*}
D_{r,G_b}(z) = \left(\frac{p_{00} z + p_{01} z^{-1}}{p_{0*}}\right)^r\left(\frac{p_{10} z^{-1} + p_{11} z}{p_{1*}}\right)^r.
\end{equation*}
\end{lemma}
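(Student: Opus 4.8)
The plan is to reveal $G_a$ first, observe that $d(\pi)$ collects contributions only from the $2r$ edge slots on which $G_a$ disagrees with its shift $G_a\circ\sigma$, and then use the conditional independence of $G_b$ given $G_a$ to factor the conditional generating function into $2r$ elementary pieces.

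\textbf{Step 1: localizing $d(\pi)$.} Write $d(\pi)=\sum_{e}\bigl(|G_a(\sigma(e))-G_b(e)|-|G_a(e)-G_b(e)|\bigr)$. When $G_a(\sigma(e))=G_a(e)$ the summand vanishes, so only the slots in $S:=\{e:G_a(\sigma(e))\neq G_a(e)\}$ matter, and $|S|=\Delta(G_a,G_a\circ\sigma)=2r$. For $e\in S$ the summand equals $2G_b(e)-1$ if $G_a(e)=1$ and $1-2G_b(e)$ if $G_a(e)=0$; in either case it is $\pm 1$. Split $S=S_1\sqcup S_0$ according to the value of $G_a(e)$.

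\textbf{Step 2: balancing the two types of slots.} I claim $|S_1|=|S_0|=r$. Fix a single cycle $(e_0,e_1,\dots,e_{\ell-1})$ of $\sigma$. The values $G_a(e_0),\dots,G_a(e_{\ell-1})$ form a cyclic binary word; the slots of $S$ on this cycle are exactly the positions immediately preceding a switch in this word, and consecutive switches around the cycle alternate between $1\to 0$ and $0\to 1$. Since a cyclic word has an even number of switches, the cycle contributes equally many slots to $S_1$ and to $S_0$; summing over the cycles of $\sigma$ (fixed points contribute nothing) gives the claim.

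\textbf{Step 3: conditioning and multiplying out.} Because the pairs $(G_a,G_b)(e)$ are i.i.d.\ over $e$, conditioned on $G_a$ the coordinates of $G_b$ are independent, with $P[G_b(e)=1\mid G_a(e)=1]=p_{11}/p_{1*}$ and $P[G_b(e)=0\mid G_a(e)=0]=p_{00}/p_{0*}$. Hence, using Step 2,
\begin{equation*}
\E\!\left[z^{d(\pi)}\,\middle|\,G_a\right]=\prod_{e\in S_1}\!\left(\frac{p_{11}z+p_{10}z^{-1}}{p_{1*}}\right)\prod_{e\in S_0}\!\left(\frac{p_{00}z+p_{01}z^{-1}}{p_{0*}}\right)=\left(\frac{p_{10}z^{-1}+p_{11}z}{p_{1*}}\right)^{r}\!\left(\frac{p_{00}z+p_{01}z^{-1}}{p_{0*}}\right)^{r}.
\end{equation*}
The right-hand side depends on $G_a$ only through $r$ (and not at all on the realization of $G_b$), so the law of total expectation yields $\E[z^{d(\pi)}\mid r]=D_{r,G_b}(z)$, as claimed.

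\textbf{Main obstacle and a check.} The only genuine subtlety is Step 2: the bookkeeping that the $+1$ and $-1$ slots are perfectly balanced within each cycle of $\sigma$; everything else is the i.i.d.\ structure of $ER(n,\bfp)$ plus the elementary fact that moving a slot from value $1$ to value $0$ (or vice versa) flips the sign of its contribution against $G_b$. As a consistency check, differentiating $D_{r,G_b}(z)$ in $z=e^t$ at $t=0$ gives $\E[d(\pi)\mid r]=r\bigl(\tfrac{p_{11}-p_{10}}{p_{1*}}+\tfrac{p_{00}-p_{01}}{p_{0*}}\bigr)$, and averaging over $r$ using $\E[2r]=2(N-c_1)p_{1*}p_{0*}$ (read off the generating function $R_\pi$) reproduces $\E[d(\pi)]=2(N-c_1)(p_{00}p_{11}-p_{01}p_{10})$ from Lemma~\ref{lemma:mean}.
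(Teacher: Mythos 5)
Your proposal is correct and takes essentially the same route as the paper's proof: localize $d(\pi)$ to the $2r$ positions where $G_a$ and $G_a\circ\sigma$ disagree, argue by a within-cycle balancing claim that exactly $r$ of these have $G_a(e)=1$ (the rest $G_a(e)=0$), and factor the conditional generating function of $d(\pi)$ given $G_a$ using the conditional independence of the $G_b(e)$. Your cyclic-word argument for the balancing step and the closing consistency check against Lemma~\ref{lemma:mean} are somewhat more explicit than what the paper writes, but the underlying argument is the same.
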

\begin{proof}
Let $a(e) = |G_a(\sigma(e)) - G_b(e)| - |G_a(e) -  G_b(e)|$.
Then $d(\pi,G_a,G_b) = \sum_e a(e)$.
Because $a(e)$ depends on $G_a$ only at $G_a(e)$, the terms of the sum are conditionally independent.
If $G_a(\sigma(e)) = G_a(e)$, then $|G_a(\sigma(e)) - G_b(e)| = |G_a(e) -  G_b(e)|$ and the contribution of $a(e)$ to $d(\pi,G_a,G_b)$ is zero.
If $G_a(\sigma(e)) \neq G_a(e)$, then $|G_a(\sigma(e)) - G_b(e)| \neq |G_a(e) -  G_b(e)|$ and $a(e)$ is either $1$ or $-1$.

Suppose that $G_a(e) = 0$ and $G_a(\sigma(e)) = 1$.
Then 
\begin{equation*}
P[a(e) = 1 | G_a(e) = 0, G_a(\sigma(e)) = 1] = \frac{p_{00}}{p_{0*}}.
\end{equation*}
Suppose that $G_a(e) = 1$ and $G_a(\sigma(e)) = 0$.
Then 
\begin{equation*}
P[a(e) = 1 | G_a(e) = 1, G_a(\sigma(e)) = 0] = \frac{p_{11}}{p_{1*}}.
\end{equation*}
Within each cycle of $\sigma$, the number of $e$ such that $G_a(e) = 0$ and $G_a(\sigma(e)) = 1$ is equal to the number of $e$ such that $G_a(e) = 1$ and $G_a(\sigma(e)) = 0$.
Throughout all of $\sigma$, the number of $e$ such that $G_a(e) = 0$ and $G_a(\sigma(e)) = 1$ is equal to $r = \frac{1}{2}\Delta(G_a, G_a \circ \sigma)$.
Thus
\begin{equation*}
D_{r,G_b}(z) = \left(\frac{p_{00} z + p_{01} z^{-1}}{p_{0*}}\right)^r\left(\frac{p_{10} z^{-1} + p_{11} z}{p_{1*}}\right)^r.
\end{equation*}
\end{proof}

Now we will apply a standard technique to obtain tail probability bounds for large deviations from the mean.
\begin{lemma}
\label{lemma:tail-bounds}
Let $(G_a,G_b) \sim ER(n,\bfp)$, let $\pi$ be a permutation of $[n]$, and let $2r = \Delta(G_a, G_a \circ \sigma)$.

If $\E[d(\pi)] \geq 0$, then
\begin{equation*}
P[d(\pi) \leq 0] \leq R_{\pi}\left(1 - \frac{q}{p_{1*}p_{0*}}\right).
\end{equation*}
If $\E[d(\pi)] \leq 0$, then
\begin{equation*}
P[d(\pi) \geq 0] \leq R_{\pi}\left(1 - \frac{q}{p_{1*}p_{0*}}\right).
\end{equation*}
\end{lemma}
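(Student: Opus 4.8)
The plan is to apply the standard Chernoff/moment-generating-function bound, conditioning first on $r = \frac{1}{2}\Delta(G_a, G_a\circ\sigma)$ and then averaging over $r$ using the generating function $R_\pi$. Suppose $\E[d(\pi)] \ge 0$; the other case is symmetric under $z \mapsto z^{-1}$ (note $D_{r,G_b}(z^{-1})$ swaps the roles of $p_{01}\leftrightarrow p_{00}$ and $p_{10}\leftrightarrow p_{11}$, which negates the conditional mean, and the minimizing argument below lands at the reciprocal point). For any $z \in (0,1]$, Markov's inequality applied to $z^{d(\pi)}$ (which is $\ge 1$ precisely when $d(\pi) \le 0$, since $z \le 1$) gives
\begin{equation*}
P[d(\pi) \le 0 \mid r] \le \E\!\left[z^{d(\pi)} \,\middle|\, r\right] = D_{r,G_b}(z),
\end{equation*}
using Lemma~\ref{lemma:conditioning}. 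Taking expectation over $r$ and using the definition of $R_\pi$,
\begin{equation*}
P[d(\pi) \le 0] \le \E\!\left[ D_{r,G_b}(z) \right]
= \E\!\left[\left(\frac{(p_{00}z + p_{01}z^{-1})(p_{10}z^{-1} + p_{11}z)}{p_{0*}p_{1*}}\right)^{\! r}\right]
= R_\pi\!\left(\frac{(p_{00}z + p_{01}z^{-1})(p_{10}z^{-1} + p_{11}z)}{p_{0*}p_{1*}}\right),
\end{equation*}
valid for every $z \in (0,1]$ since $R_\pi$ has nonnegative coefficients and is increasing on its nonnegative argument.

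It then remains to choose $z$ so that the argument of $R_\pi$ equals $1 - \frac{q}{p_{1*}p_{0*}}$, or at least is bounded above by it. Expanding the numerator,
\begin{equation*}
(p_{00}z + p_{01}z^{-1})(p_{10}z^{-1} + p_{11}z) = p_{00}p_{11}z^2 + p_{01}p_{10}z^{-2} + (p_{00}p_{10} + p_{01}p_{11}).
\end{equation*}
Minimizing the right-hand side over $z > 0$: the AM–GM-type bound $p_{00}p_{11}z^2 + p_{01}p_{10}z^{-2} \ge 2\sqrt{p_{00}p_{11}p_{01}p_{10}}$ is achieved at $z^2 = \sqrt{p_{01}p_{10}/(p_{00}p_{11})}$, which indeed lies in $(0,1]$ because $p_{01}p_{10} \le p_{00}p_{11}$ whenever $\E[d(\pi)]\ge 0$ (by Lemma~\ref{lemma:mean}). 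At this $z$ the numerator becomes
\begin{equation*}
2\sqrt{p_{00}p_{11}p_{01}p_{10}} + p_{00}p_{10} + p_{01}p_{11},
\end{equation*}
and the key algebraic identity to verify is that
\begin{equation*}
p_{0*}p_{1*} - \left(2\sqrt{p_{00}p_{11}p_{01}p_{10}} + p_{00}p_{10} + p_{01}p_{11}\right) = \left(\sqrt{p_{11}p_{00}} - \sqrt{p_{10}p_{01}}\right)^2 = q,
\end{equation*}
which follows by expanding $p_{0*}p_{1*} = (p_{00}+p_{01})(p_{10}+p_{11}) = p_{00}p_{10} + p_{00}p_{11} + p_{01}p_{10} + p_{01}p_{11}$ and recognizing $p_{00}p_{11} + p_{01}p_{10} - 2\sqrt{p_{00}p_{11}p_{01}p_{10}}$ as a perfect square. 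Hence the argument of $R_\pi$ equals exactly $1 - q/(p_{1*}p_{0*})$, giving the claimed bound.

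The main obstacle is essentially bookkeeping rather than conceptual: getting the direction of the Markov inequality right for $z \le 1$ versus $z \ge 1$, confirming that the optimal $z$ respects the constraint needed for monotonicity of $R_\pi$ (i.e. that the argument stays nonnegative, equivalently $q \le p_{1*}p_{0*}$, which holds since $q \le p_{11}p_{00} \le p_{1*}p_{0*}$), and carrying out the symmetric case cleanly. I would handle the $\E[d(\pi)] \le 0$ case by noting that replacing $(p_{00},p_{01},p_{10},p_{11})$ with $(p_{01},p_{00},p_{11},p_{10})$ — equivalently flipping $G_b$'s labels — negates $d(\pi)$ and leaves $q$ and $p_{1*}p_{0*}$ invariant, reducing it to the first case; alternatively one repeats the argument with $z \ge 1$ and the minimizer $z^2 = \sqrt{p_{00}p_{11}/(p_{01}p_{10})} \ge 1$.
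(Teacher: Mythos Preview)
Your proposal is correct and follows essentially the same route as the paper: a Chernoff/Markov bound on $z^{d(\pi)}$ conditioned on $r$, using Lemma~\ref{lemma:conditioning} for the conditional generating function, optimizing at $z^{*}=(p_{01}p_{10}/p_{00}p_{11})^{1/4}$ (which lies on the correct side of $1$ by Lemma~\ref{lemma:mean}), and then averaging over $r$ via $R_\pi$. The only cosmetic difference is that you average over $r$ before minimizing over $z$ while the paper minimizes first; since the optimizer is independent of $r$ and $R_\pi$ is monotone on $[0,\infty)$, this makes no difference.
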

\begin{proof}
For all $0 < z \leq 1$
\begin{equation*}
P[d(\pi) \leq 0 | G_a] = \E[\one_{d(\pi) \leq 0} | G_a] \leq \E[z^{d(\pi)} | G_a] = D_{r,G_b}(z).
\end{equation*}

Starting from the expression of Lemma~\ref{lemma:conditioning}, we have
\begin{IEEEeqnarray*}{rCl}
D_{r,G_b}(z) 
&=& \left(\frac{p_{00} z + p_{01} z^{-1}}{p_{0*}}\right)^r\left(\frac{p_{10} z^{-1} + p_{11} z}{p_{1*}}\right)^r\\
&=& \left(\frac{p_{01}p_{10} z^{-2} + p_{00}p_{10} + p_{01}p_{11} + p_{11}p_{00} z^2}{p_{0*}p_{1*}}\right)^r\\
&=& \left(1 - \frac{p_{10}p_{01} + p_{11}p_{00} - p_{10}p_{01}z^{-2} - p_{11}p_{00}z^2}{p_{0*}p_{1*}}\right)^r.
\end{IEEEeqnarray*}
The value of $z$ that minimizes $D_{r,G_b}(z)$ is 
\begin{equation*}
z^* = \left(\frac{p_{01}p_{10}}{p_{00}p_{11}}\right)^{1/4}.
\end{equation*}
From Lemma~\ref{lemma:mean}, $\E[d(\pi)] = 2(N-c_1)(p_{00}p_{11} - p_{01}p_{10})$, so $z^* \leq 1$ exactly when $\E[d(\pi)] \geq 0$.
Substituting, we obtain
\begin{IEEEeqnarray*}{rCl}
  D_{r,G_b}(z^*)
&=& \left(1 - \frac{p_{10}p_{01} + p_{11}p_{00} - 2\sqrt{p_{00}p_{10}p_{01}p_{11}}}{p_{0*}p_{1*}}\right)^r\\
&=& \left(1 - \frac{\left(\sqrt{p_{11}p_{00}} - \sqrt{p_{10}p_{01}}\right)^2}{p_{0*}p_{1*}}\right)^r\\
&=& \left(1 - \frac{q}{p_{0*}p_{1*}}\right)^r\\
\end{IEEEeqnarray*}
Finally, the first claim follows from the definition of $R_{\pi}$:
\begin{IEEEeqnarray*}{rCl}
P[d(\pi) \leq 0]
&=& \sum_r P[\Delta(G_a, G_a \circ \sigma) = 2r] P[d(\pi) \leq 0|G_a]\\
&\leq& \sum_r P[\Delta(G_a, G_a \circ \sigma) = 2r] \left(1 - \frac{q}{p_{1*}p_{0*}}\right)^r\\
&=& R_{\pi}\left(1 - \frac{q}{p_{1*}p_{0*}}\right).
\end{IEEEeqnarray*}
For all $1 \leq z < \infty$, $P[d(\pi) \geq 0 | G_a] \leq D_{r,G_b}(z)$.
The proof of the second claim matches the proof of the first claim with the appropriate inequalities flipped.
\bcomment{
\begin{IEEEeqnarray*}{Cl}
 & p_{1*}p_{0*}(1-z^*)\\
=& p_{1*}p_{0*} - p_{00}p_{10} - p_{01}p_{11} - 2 \sqrt{p_{00}p_{10}p_{01}p_{11}}\\
=& p_{10}p_{00} + p_{10}p_{01} + p_{11}p_{00} + p_{11}p_{01}\\
 & -p_{00}p_{10} - p_{01}p_{11} - 2 \sqrt{p_{00}p_{10}p_{01}p_{11}}\\
=& p_{10}p_{01} + p_{11}p_{00} - 2 \sqrt{p_{00}p_{10}p_{01}p_{11}}\\
=& \left(\sqrt{p_{11}p_{00}} - \sqrt{p_{10}p_{01}}\right)^2\\
\end{IEEEeqnarray*}}
\end{proof}

\subsection{Cycle combinatorics}
\label{subsection:cycles}
Let $a_{l,k,r}$ be the number of cyclic sequences of length $l$ with $k$ ones and $r$ ones that followed by zeros.
Define the corresponding generating function 
\begin{equation*}
a_l(x,y,z) = \sum_{k,r} a_{l,k,r} x^k y^{l-k} z^r .
\end{equation*}

Let $c_l$ be the number of cycles of length $l$ in $\sigma$.
Then
\begin{equation*}
R_{\pi}(z) = \prod_{l = 1}^n a_l(p_{1*},p_{0*},z)^{c_l}
\end{equation*}
because $R_{\pi}$ is the generating function for the random variable $\Delta(G_a \circ \sigma, G_a)$ and each one followed by a zero in a cycle of $\sigma$ contributes to this quantity.

\begin{theorem}
\label{thm:Kgf}
Let $\pi$ be a permutation of $[n]$ such that $\sigma = l(\pi)$ has $c_1$ fixed points.
Then
\begin{equation*}
R_{\pi}\left(1 - \frac{q}{p_{1*}p_{0*}}\right) \leq (1 - 2q)^{\frac{N-c_1}{2}}.
\end{equation*}
\end{theorem}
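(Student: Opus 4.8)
The plan is to use the product formula $R_\pi(z) = \prod_{l=1}^{n} a_l(p_{1*},p_{0*},z)^{c_l}$ stated above, together with the observation that the cycles of $\sigma$ partition the $N$-element set $\binom{[n]}{2}$, so that $\sum_{l\ge 1} l\,c_l = N$ and hence $\sum_{l\ge 2} l\,c_l = N-c_1$. The length-one factor is $a_1(p_{1*},p_{0*},z) = p_{1*}+p_{0*} = 1$, independent of $z$, so fixed points are inert. It therefore suffices to prove the per-cycle estimate
\[
a_l\!\left(p_{1*},\,p_{0*},\,1-\tfrac{q}{p_{1*}p_{0*}}\right) \;\le\; (1-2q)^{l/2}\qquad\text{for all }l\ge 2,
\]
and multiply over cycles: $R_\pi\bigl(1-\tfrac{q}{p_{1*}p_{0*}}\bigr) = \prod_{l\ge 2} a_l^{c_l} \le \prod_{l\ge 2}(1-2q)^{l c_l/2} = (1-2q)^{(N-c_1)/2}$.

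To establish the per-cycle estimate I would pass to a transfer matrix. Expanding $a_l(x,y,z) = \sum_{k,r} a_{l,k,r}x^k y^{l-k}z^r$ as a sum over the $2^l$ binary strings around a cycle of length $l$, with weight $x$ on each $1$, $y$ on each $0$, and $z$ on each occurrence of a $1$ immediately followed by a $0$, gives $a_l(x,y,z) = \mathrm{tr}(T^l)$, where, with states indexed by $0,1$ and the vertex weight charged to the tail of each step,
\[
T = \begin{pmatrix} y & y \\ xz & x \end{pmatrix},\qquad \mathrm{tr}\,T = x+y,\qquad \det T = xy(1-z).
\]
Hence $a_l(x,y,z) = \lambda_+^l+\lambda_-^l$, where $\lambda_\pm$ are the eigenvalues of $T$. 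Substituting $x=p_{1*}$, $y=p_{0*}$, $z = 1-\tfrac{q}{p_{1*}p_{0*}}$ makes $\mathrm{tr}\,T = 1$ and $\det T = q$, so $\lambda_\pm = \tfrac12\bigl(1\pm\sqrt{1-4q}\bigr)$. Since $q = (\sqrt{p_{11}p_{00}}-\sqrt{p_{10}p_{01}})^2$ with $\sqrt{p_{11}p_{00}},\sqrt{p_{10}p_{01}}\in[0,\tfrac12]$ (by AM--GM, as $p_{11}+p_{00}\le 1$ and $p_{10}+p_{01}\le 1$), we get $0\le q\le\tfrac14$, so both eigenvalues are real and lie in $[0,1]$.

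The key identity is that $\lambda_++\lambda_- = 1$ and $\lambda_+\lambda_- = q$ force $\lambda_+^2+\lambda_-^2 = (\lambda_++\lambda_-)^2-2\lambda_+\lambda_- = 1-2q$; in particular $a_2 = 1-2q$, so the target bound is tight at $l=2$. For $l\ge 2$ the estimate is then just monotonicity of $\ell^p$-norms in $p$ applied to the nonnegative vector $(\lambda_+,\lambda_-)$:
\[
a_l = \lambda_+^l+\lambda_-^l = \bigl\|(\lambda_+,\lambda_-)\bigr\|_l^{\,l} \le \bigl\|(\lambda_+,\lambda_-)\bigr\|_2^{\,l} = \bigl(\lambda_+^2+\lambda_-^2\bigr)^{l/2} = (1-2q)^{l/2}.
\]
Combined with the reduction above, this proves the theorem.

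I expect the transfer-matrix bookkeeping and the $q\le\tfrac14$ verification to be entirely routine; the only real content is spotting that the substitution normalizes $T$ to unit trace and determinant $q$, so that $\lambda_+^2+\lambda_-^2 = 1-2q$, and then invoking the power-mean (equivalently $\ell^p$-norm) inequality, which delivers precisely the exponent $l/2$ needed for the product over cycles to collapse to $(N-c_1)/2$. The degenerate cases $q=0$ (giving $\lambda_-=0$) and $q=\tfrac14$ (giving $\lambda_+=\lambda_-$) require no separate treatment, since the norm inequality holds for every nonnegative vector regardless of whether its coordinates are distinct or positive.
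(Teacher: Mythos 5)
Your proof is correct, and it reaches the same closed form for $a_l$ and applies the same final $\ell^p$-norm step, but the route to that closed form is genuinely different from the paper's. The paper derives the formula $a_l(x,y,z)=\lambda_+^l+\lambda_-^l$ (in disguised form) through three combinatorial steps: Lemma~\ref{lemma:bijection} relates $a_{l,k,r}$ to the numbers $b_{l,s}$ of cyclic binary strings with no two consecutive ones via a marking bijection; Lemma~\ref{lemma:bijection-two} evaluates $b_{l,s}$ via a second bijection involving ternary cyclic sequences; and Lemma~\ref{lemma:cycle} stitches these together with two applications of the binomial theorem. You instead observe that $a_l(x,y,z)=\mathrm{tr}(T^l)$ for a $2\times 2$ transfer matrix $T$, so $a_l=\lambda_+^l+\lambda_-^l$ with the eigenvalues read off from $\mathrm{tr}\,T=x+y$ and $\det T=xy(1-z)$. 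After the substitution these become $\lambda_\pm=\tfrac{1}{2}\bigl(1\pm\sqrt{1-4q}\bigr)$, which are exactly the paper's $\tfrac{1+g}{2}$ and $\tfrac{1-g}{2}$. This bypasses Lemmas~\ref{lemma:bijection}--\ref{lemma:cycle} entirely and exposes the structure cleanly: the specific substitution $z=1-q/(p_{1*}p_{0*})$ is precisely what normalizes $T$ to unit trace and determinant $q$, so $\lambda_+^2+\lambda_-^2=1-2q$ drops out of the elementary symmetric function identity, and the $\ell^p$-norm bound $\|\cdot\|_l\leq\|\cdot\|_2$ for $l\geq 2$ does the rest, with equality at $l=2$. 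A small bonus of your version is that you explicitly check $0\leq q\leq\tfrac{1}{4}$, so that $\lambda_\pm$ are real and nonnegative; the paper needs the same fact for its $\ell^p$ step (otherwise $\bigl(\tfrac{1-g}{2}\bigr)^l$ is not the $l$-th power of a norm coordinate when $l$ is odd) but leaves it implicit. The paper's route does yield the explicit count $b_{l,s}$ as a byproduct, but for the purposes of Theorem~\ref{thm:Kgf} your transfer-matrix argument is shorter and arguably more transparent.
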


The proof of Theorem~\ref{thm:Kgf} will use a few combinatorial lemmas.
Let $b_{l,s}$ be the number of cyclic sequences of length $l$ with $s$ ones, none of which are consecutive.

\begin{lemma}
\label{lemma:bijection}
For all $l,k,s \in \N$,
\begin{equation*}
\sum_r a_{l,k,r} \binom{r}{s} = b_{l,s} \binom{l-2s}{k-s}.
\end{equation*}
\end{lemma}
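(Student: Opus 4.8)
The statement to prove is the identity
\begin{equation*}
\sum_r a_{l,k,r} \binom{r}{s} = b_{l,s} \binom{l-2s}{k-s},
\end{equation*}
and the natural route is a bijective/double-counting argument. The left side counts pairs $(w, S)$ where $w$ is a cyclic binary sequence of length $l$ with $k$ ones, and $S$ is an $s$-element subset of the ``descents'' of $w$, i.e. of the positions where a $1$ is immediately followed (cyclically) by a $0$. The right side should count the same pairs, organized differently: first choose which ``$10$'' blocks are forced, then fill in the rest. So I would set up an explicit bijection between these two sets of combinatorial objects.

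\textbf{Key steps, in order.} First I would fix the interpretation of $a_{l,k,r}$ precisely: $r$ is the number of occurrences of the factor $10$ read cyclically, so a pair $(w,S)$ with $|S| = s$ amounts to marking $s$ of those $r$ descents. Second, given such a marked sequence, I would \emph{contract} each of the $s$ marked $10$-blocks into a single symbol — think of it as deleting the $0$ that follows each marked $1$ (equivalently, merging the marked $1$ and its following $0$ into one ``special'' token). This produces a cyclic sequence of length $l - s$. Third, I would observe that each marked $1$ in the contracted sequence is now guaranteed not to be followed by an ordinary $0$ — but I actually want the target object to be a cyclic sequence of length $l$ where $s$ of the ones are isolated; so the cleaner contraction is: each marked $1$ together with its following $0$ becomes a block that I \emph{further} contract so that, among the remaining structure, these $s$ ones become the non-consecutive ones counted by $b_{l,s}$, while the leftover $l - 2s$ positions carry the remaining $k - s$ ones freely, giving the factor $\binom{l-2s}{k-s}$. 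Concretely: from $(w,S)$, record (a) the cyclic pattern of length $l$ obtained by keeping the $s$ marked ones and the $s$ zeros that follow them and replacing every other symbol by a placeholder — after noting these $s$ ones are automatically non-consecutive (each is followed by a $0$), this is one of the $b_{l,s}$ configurations; and (b) the arrangement of the remaining $k - s$ ones among the remaining $l - 2s$ placeholder positions, which is $\binom{l - 2s}{k-s}$ choices. Fourth, I would check this map is a bijection by describing the inverse: given a $b_{l,s}$-configuration (positions of $s$ isolated ones, each followed by a $0$) and a choice of $k - s$ of the remaining $l - 2s$ slots, reconstruct $w$ by placing ones in those slots and declaring the $s$ distinguished ``$10$'' pairs to be the marked descents $S$; one must verify that the marked positions genuinely are descents of the reconstructed $w$ and that the count $r$ of all descents is consistent (it doesn't matter — we sum over $r$).

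\textbf{Main obstacle.} The delicate point is the cyclic boundary behavior: when $l = 2s$ the quantity $\binom{l-2s}{k-s} = \binom{0}{k-s}$ forces $k = s$, and $b_{l,s}$ in the cyclic sense needs the convention that makes $\sum_r a_{2s,s,r}\binom{r}{s}$ come out right (every maximal run must be exactly ``$10$''), so I would double-check the edge cases $s=0$ (where it reduces to $\sum_r a_{l,k,r} = \binom{l}{k}$, i.e. counting cyclic words with $k$ ones) and $l = 2s$. The other thing to be careful about is that contracting a marked $10$-block can merge adjacent blocks or change which other positions are descents; phrasing the bijection in terms of ``mark $s$ of the $10$-factors, then read off the induced structure'' rather than iteratively deleting symbols avoids this, because the $s$ marked ones and their trailing zeros occupy $2s$ distinct cyclic positions and the identity of the remaining $l - 2s$ positions and their contents is untouched. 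Once the objects are set up this way the bijection is essentially forced, so I expect the write-up to be short — the only real work is stating the two-part encoding cleanly and verifying the inverse respects the ``non-consecutive ones'' condition.
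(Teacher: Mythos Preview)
Your approach is correct and essentially identical to the paper's: both interpret the left side as counting cyclic words with $k$ ones in which $s$ of the descents (ones followed by zeros) are marked, observe that marked ones are automatically non-consecutive, and encode such an object as a $b_{l,s}$-configuration (the marked positions) together with a free placement of the remaining $k-s$ ones among the $l-2s$ positions not occupied by a marked one or its trailing zero. Your discussion is more careful than the paper's about the inverse map and the edge cases, but the underlying bijection is the same.
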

\begin{proof}
This identity is due the following bijection.
The left side of the equation counts cyclic sequences with $k$ ones, in which $s$ of the ones that are followed by zeros have been marked.
No two of these marked ones are consecutive.
To produce one of the objects counted on the right side, create a new cyclic sequence by placing a one each marked position and filling in the rest with zeros.
There are $b_{l,s}$ such cycles.
There are $l-2s$ remaining unspecified positions in the first cycle.
In these positions there must be $k-s$ ones and $l-k-s$ zeros.
Record the symbols at these positions in a vector.
There are $\binom{l-2s}{k-s}$ such vectors.
\end{proof}

\begin{lemma}
\label{lemma:bijection-two}
For all $l,s \in \N$.
\begin{equation*}
2^{l-2s} b_{l,s} = 2 \sum_i \binom{l}{2i} \binom{i}{s}\\
\end{equation*}
\end{lemma}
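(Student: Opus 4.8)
The plan is to deduce Lemma~\ref{lemma:bijection-two} from a single generating-function identity. Write $J_l(t) = \sum_{s \geq 0} b_{l,s} t^s$, the generating polynomial of cyclic binary strings of length $l$ having $s$ pairwise non-consecutive ones (equivalently, the independent-set polynomial of the cycle on $l$ vertices). Since $[t^s]\bigl(2^{1-l}\sum_i \binom{l}{2i}(1+4t)^i\bigr) = 2^{1-l}4^s\sum_i\binom{l}{2i}\binom{i}{s}$, comparing coefficients of $t^s$ shows that Lemma~\ref{lemma:bijection-two} is equivalent to the polynomial identity
\begin{equation*}
J_l(t) = 2^{1-l}\sum_i \binom{l}{2i}(1+4t)^i.
\end{equation*}
So it suffices to prove this.

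For the left-hand side I would use a transfer-matrix argument. Let $M = \bigl(\begin{smallmatrix} 1 & t \\ 1 & 0 \end{smallmatrix}\bigr)$, the weighted adjacency matrix of the two-state automaton whose states record the current symbol, where the transition into state $1$ carries weight $t$ and the loop $1 \to 1$ is forbidden. A closed walk of length $l$ is exactly a cyclic string of length $l$; the forbidden loop enforces that no two ones are adjacent around the cycle (including the wrap-around), and each one is entered by exactly one edge and so contributes exactly one factor of $t$. Hence $J_l(t) = \operatorname{tr}(M^l)$. The characteristic polynomial of $M$ is $\lambda^2 - \lambda - t$, with roots $\phi, \psi = \tfrac{1}{2}(1 \pm \sqrt{1+4t})$, so $J_l(t) = \operatorname{tr}(M^l) = \phi^l + \psi^l$. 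One should check the degenerate small cases $l = 1, 2$ directly against the definition of $b_{l,s}$ --- in particular $b_{1,1} = 0$, since the lone one of a length-$1$ cyclic string is adjacent to itself --- and they agree, since $\phi + \psi = 1$ and $\phi^2 + \psi^2 = 1 + 2t$.

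For the right-hand side, apply $(1+w)^l + (1-w)^l = 2\sum_i \binom{l}{2i}w^{2i}$ with $w^2 = 1+4t$: this gives $2^{1-l}\sum_i\binom{l}{2i}(1+4t)^i = 2^{-l}\bigl[(1+w)^l + (1-w)^l\bigr] = \bigl(\tfrac{1+w}{2}\bigr)^l + \bigl(\tfrac{1-w}{2}\bigr)^l = \phi^l + \psi^l$, which matches the left-hand side and completes the proof. The one place needing care is the transfer-matrix step: making precise the weight-preserving bijection between surviving closed walks and valid cyclic strings, and pinning down the small-$l$ conventions; everything afterward is formal manipulation. A self-contained alternative that sidesteps the automaton is to establish $J_l = J_{l-1} + t J_{l-2}$ for $l \geq 3$ by conditioning on whether a fixed vertex of the $l$-cycle lies in the independent set (with the path analogue $I_l$, satisfying $I_l = I_{l-1} + t I_{l-2}$, as an auxiliary), and then match this recurrence together with $J_1 = 1$ and $J_2 = 1 + 2t$ against $\phi^l + \psi^l$, which satisfies the same recurrence because $\phi, \psi$ are the roots of $X^2 - X - t$.
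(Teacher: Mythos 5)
Your proof is correct, but it takes a genuinely different route from the paper. The paper proves the identity by a two-way count of an auxiliary combinatorial object: ternary cyclic sequences of length $l$ with $s$ ones in which each gap between consecutive ones contains an odd number of twos. Reading off the subsequence of ones-and-twos (whose length is forced to be even, say $2i$) gives the right-hand side, while reading off the subsequence of zeros-and-twos gives the left-hand side via a parity-constraint count of $2^{l-2s}$. Your argument instead works at the level of the generating polynomial $J_l(t)=\sum_s b_{l,s}t^s$: you recognize it as the trace of the $l$-th power of the transfer matrix $M=\bigl(\begin{smallmatrix}1&t\\1&0\end{smallmatrix}\bigr)$, diagonalize to get $J_l=\phi^l+\psi^l$ with $\phi,\psi=\tfrac12(1\pm\sqrt{1+4t})$, and match this against the even-part binomial expansion of the right-hand side. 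Both are valid; the paper's bijection is self-contained and needs no linear algebra, while your route is more mechanical and has the advantage of producing the closed form $J_l(t)=\bigl(\tfrac{1+\sqrt{1+4t}}{2}\bigr)^l+\bigl(\tfrac{1-\sqrt{1+4t}}{2}\bigr)^l$ directly --- which is, in fact, precisely the expression the paper re-derives later inside the proof of Theorem~\ref{thm:Kgf} (where $g=\sqrt{1-4q}$ plays the role of $\sqrt{1+4t}$). So your approach would actually let Lemmas~\ref{lemma:bijection} and~\ref{lemma:bijection-two} and the opening manipulation in Theorem~\ref{thm:Kgf} be collapsed into a single step. You are right that the $l=1,2$ wrap-around conventions deserve explicit checking, and your checks ($J_1=1$ since $b_{1,1}=0$, $J_2=1+2t$) are the correct ones; the recurrence $J_l=J_{l-1}+tJ_{l-2}$ you sketch as a fallback is also valid for $l\geq 3$.
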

\begin{proof}
For $s \geq 1$, both sides of the equation count the set of ternary cyclic sequences of length $l$ with exactly $s$ ones, such that in each interval separating a pair of ones there are an odd number of twos (which forces the interval to be nonempty).
In such a sequence, the number of indices with either a one or a two is even.
To obtain the expression on the right side, consider the subsequence induced by these symbols and let $2i$ be its length.
In this subsequence, ones appear either only in even positions or only in odd positions, so there are $2\binom{i}{s}$ possible subsequences and $\binom{l}{2i}$ ways the subsequence can appear in the full sequences.
To obtain the expression on the left side, consider the subsequence induced by the zeros and twos.
There are $b_{l,s}$ ways this subsequence can appear in the full sequence.
Regardless of the location of the ones, there are $2^{l-2s}$ possible induced sequences of zeros and twos: there are $l-s$ total symbols broken into $s$ segments and there is a parity constraint on each segment.

For $s=0$, both sides are equal to $2^l$.
\end{proof}

\begin{lemma}
\label{lemma:cycle}
For all $l \in \N$, the formal power series $a_l(x,y,z)$ satisfies
\begin{equation*}
a_l(x,y,z) = 2^{1-l} (x+y)^l \sum_i \binom{l}{2i} \left(1 + \frac{4xy(z-1)}{(x+y)^2}\right)^i.
\end{equation*}
\end{lemma}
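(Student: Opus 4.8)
The plan is to start from the definition $a_l(x,y,z) = \sum_{k,r} a_{l,k,r}\, x^k y^{l-k} z^r$ and rewrite the $z^r$ factor so as to expose Lemma~\ref{lemma:bijection}. Writing $z^r = (1+(z-1))^r = \sum_s \binom{r}{s}(z-1)^s$ and interchanging the (finite) sums gives
\begin{equation*}
a_l(x,y,z) = \sum_s (z-1)^s \sum_k x^k y^{l-k} \sum_r a_{l,k,r}\binom{r}{s},
\end{equation*}
and by Lemma~\ref{lemma:bijection} the innermost sum is exactly $b_{l,s}\binom{l-2s}{k-s}$.

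Next I would evaluate the sum over $k$. Factoring out $(xy)^s$ and reindexing by $j=k-s$, the binomial theorem gives $\sum_k \binom{l-2s}{k-s} x^k y^{l-k} = (xy)^s (x+y)^{l-2s}$, so
\begin{equation*}
a_l(x,y,z) = \sum_s b_{l,s}\,(xy)^s (z-1)^s (x+y)^{l-2s}.
\end{equation*}
I would then substitute Lemma~\ref{lemma:bijection-two} in the form $b_{l,s} = 2^{1-l+2s}\sum_i \binom{l}{2i}\binom{i}{s}$, pull out $2^{1-l}$, combine $2^{2s}(xy)^s(z-1)^s = (4xy(z-1))^s$, and swap the sums over $s$ and $i$. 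The inner sum over $s$ is a binomial expansion:
\begin{equation*}
\sum_s \binom{i}{s}\bigl(4xy(z-1)\bigr)^s (x+y)^{l-2s} = (x+y)^{l-2i}\left((x+y)^2 + 4xy(z-1)\right)^i,
\end{equation*}
which is legitimate termwise because $\binom{l}{2i}$ vanishes unless $l-2i\geq 0$ and $\binom{i}{s}$ vanishes unless $i-s\geq 0$. Collecting the leftover $(x+y)^{2i}$ into a global factor $(x+y)^l$ gives the stated identity, with $\left(1+\frac{4xy(z-1)}{(x+y)^2}\right)^i$ read as shorthand for the polynomial $(x+y)^{-2i}\left((x+y)^2+4xy(z-1)\right)^i$.

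There is no real obstacle: all the combinatorial content sits in the two bijections proved above, and what remains is bookkeeping. The only points requiring a little care are keeping the summation ranges straight through the two interchanges of order of summation, and the observation that every object in sight is a polynomial in $x,y,z$, so the binomial-theorem and rearrangement steps need no analytic justification. To avoid any apparent negative power of $x+y$ I would keep everything in the form $(x+y)^{l-2i}\bigl((x+y)^2+4xy(z-1)\bigr)^i$ until the final line.
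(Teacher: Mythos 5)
Your proof is correct and follows essentially the same route as the paper's: expand $z^r$ via $(1+(z-1))^r$, apply Lemma~\ref{lemma:bijection}, use the binomial theorem to collapse the sum over $k$, substitute Lemma~\ref{lemma:bijection-two}, and finish with one more binomial expansion. The only cosmetic difference is that the paper factors out $(x+y)^l$ first and treats $\sum_s b_{l,s} w^s$ as a standalone computation, whereas you keep the $(x+y)^{l-2s}$ factor inline to avoid writing negative powers of $x+y$; this is the same calculation organized slightly differently.
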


\begin{proof}
Applying the binomial theorem to expand $z^r$, then Lemma~\ref{lemma:bijection}, then the binomial theorem again, we obtain
\begin{IEEEeqnarray*}{rCl}
a_l(x,y,z)
&=& \sum_k \sum_r x^k y^{l-k} z^r a_{l,k,r} \\
&=& \sum_k \sum_r x^k y^{l-k} a_{l,k,r} \sum_s \binom{r}{s} (z-1)^s \\
&=& \sum_s \sum_k x^k y^{l-k} b_{l,s} \binom{l-2s}{k-s} (z-1)^s \\
&=& \sum_s b_{l,s} x^s y^s (z-1)^s \sum_k x^{k-s} y^{l-k-s}  \binom{l-2s}{k-s} \\
&=& \sum_s b_{l,s} (xy(z-1))^s (x+y)^{l-2s} \\
&=& (x+y)^l \sum_s b_{l,s} \left(\frac{xy(z-1)}{(x+y)^2}\right)^s.
\end{IEEEeqnarray*}
Applying Lemma~\ref{lemma:bijection-two} followed by the binomial theorem, we obtain
\begin{IEEEeqnarray*}{rCl}
\sum_s b_{l,s} w^s
&=& \sum_s 2^{1-l+2s} \sum_i \binom{l}{2i} \binom{i}{s} w^s\\
&=& 2^{1-l} \sum_i \binom{l}{2i} \sum_s \binom{i}{s} (4w)^s\\
&=& 2^{1-l} \sum_i \binom{l}{2i} (1+4w)^i.
\end{IEEEeqnarray*}
Combining these gives the lemma.
\end{proof}

\begin{proof}[of Theorem~\ref{thm:Kgf}]
Let $g = \sqrt{1 - 4q}$.
Substituting $p_{1*}$, $p_{0*}$, and $1 - \frac{q}{p_{1*}p_{0*}}$ into the expression from Lemma~\ref{lemma:cycle}, we obtain
\begin{IEEEeqnarray*}{rCl}
a_l\left(p_{1*},p_{0*},1 - \frac{q}{p_{1*}p_{0*}}\right)
&=& 2^{1-l} \sum_i \binom{l}{2i} g^{2i} \\
&=& 2^{-l} \sum_j \binom{l}{j} (1+(-1)^j) g^j \\
&=& \left(\frac{1+g}{2}\right)^l + \left(\frac{1-g}{2}\right)^l\\
&\leq& \left(\left(\frac{1+g}{2}\right)^2 + \left(\frac{1-g}{2}\right)^2\right)^{l/2} \\
&=& \left(\frac{1+g^2}{2}\right)^{l/2} \\
&=& (1 - 2 q)^{l/2}.
\end{IEEEeqnarray*}
Here we have used a standard p-norm inequality, which states that for a vector $\bfx$, $\|\bfx\|_l \leq \|\bfx\|_2$ when $l \geq 2$.

We have shown
\begin{equation*}
a_l\left(p_{1*}, p_{0*}, 1 - \frac{q}{p_{1*}p_{0*}}\right) \leq (1 - 2q)^{l/2}
\end{equation*}
for $l \geq 2$.
Because a cycle of length one cannot have a run boundary, $a_1(p_{1*}, p_{0*}, z) = 1$.
Combining these with 
\begin{equation*}
R_{\pi}(z) = \prod_{l = 1}^n a_l(p_{1*},p_{0*},z)^{c_l}
\end{equation*}
and $\sum_{l=1}^n l c_l = N$, we obtain the claim.
\end{proof}

\subsection{Proof of Theorem~\ref{thm:general-achievability}}
\label{subsection:union}
Now we can apply Lemma~\ref{lemma:tail-bounds} and Theorem~\ref{thm:Kgf} to prove Theorem~\ref{thm:general-achievability}.
\begin{proof}[of Theorem~\ref{thm:general-achievability}]
Let $S_{n,m}$ be the set of permutations of $[n]$ that move exactly $m$ points and fix the other $n-m$.
Then $|S_{n,m}| = \binom{n}{m} !m \leq n^m$, where $!m$ is the number of derangements of $[m]$.
If $\pi \in S_{n,m}$, then $e=\{i,j\}$ is a fixed point of $\sigma$ if either $i$ and $j$ are both fixed points of $\pi$ or $i$ and $j$ form a cycle of length 2 in $\pi$.
Thus $c_1$, the number of fixed points of $\sigma$,  satisfies $\binom{n-m}{2} \leq c_1 \leq \binom{n-m}{2} + \frac{m}{2}$.
Thus 
\begin{IEEEeqnarray}{rCl}
N - c_1 &\geq& \frac{n(n-1) - (n-m)(n-m-1)-m}{2}\nonumber\\
&=& \frac{m(2n-m-2)}{2}. \label{eq:c-bound}
\end{IEEEeqnarray}

The probability that there is some permutation that produces a better match than the identity permutation is
\begin{IEEEeqnarray*}{Cl}
& P [ \vee_{\pi \neq I} d(\pi,G_a,G_b) \leq 0 ]\\
\leq& \sum_{\pi \neq I} P [d(\pi,G_a,G_b) \leq 0]\\
=& \sum_{m=2}^n \sum_{\pi \in S_{n,m}} P [d(\pi,G_a,G_b) \leq 0]\\
\leq& \sum_{m=2}^n n^m \max_{\pi \in S_{n,m}} P [d(\pi,G_a,G_b) \leq 0].
\end{IEEEeqnarray*}
Here we applied the union bound, grouped permutations by the number of points that they move, and considered the worst case permutation in each group.


From Lemma~\ref{lemma:tail-bounds} and Theorem~\ref{thm:Kgf},
\begin{equation*}
P [d(\pi,G_a,G_b) \leq 0] \leq R_{\pi}(z^*) \leq (1 - 2q)^{\frac{N-c_1}{2}},
\end{equation*}
where $q  = \left(\sqrt{p_{11}p_{00}} - \sqrt{p_{10}p_{01}}\right)^2$.
Substituting, we obtain
\begin{IEEEeqnarray*}{Cl}
& P [ \vee_{\pi \neq I} d(\pi,G_a,G_b) \leq 0 ]\\
\leq     & \sum_{m=2}^n n^m (1 - 2q)^{\frac{N-c_1}{2}}\\
\leqa{a} & \sum_{m=2}^n n^m (1 - 2q)^{\frac{m(2n-m-2)}{4}}\\
=        & \sum_{m=2}^n \left(n \exp \left( \frac{2n-m-2}{4} \log (1 - 2q) \right)\right)^m \\
\leqa{b} & \sum_{m=2}^n \left(n \exp \left( \frac{-q(2n-m-2)}{2} \right)\right)^m \\
\leqa{c} & \sum_{m=2}^n \left(n \exp \left( -\frac{q(n-2)}{2} \right)\right)^m.
\end{IEEEeqnarray*}
Inequality $(a)$ follows from \eqref{eq:c-bound}, inequality $(b)$ follows from $\log(1+x) \leq x$, and inequality $(c)$ follows from $m \leq n$.

Let $x = n \exp(-q(n-2)/2)$.
The condition $x = o(1)$ is equivalent to
\begin{equation*}
q \geq 2\frac{\log (n \omega(1))}{n-2} = 2\frac{\log n + \omega(1)}{n},
\end{equation*}
which is exactly a hypothesis of the theorem.
Thus for sufficiently large $n$, we have $x < 1$ and 
\begin{equation*}
P [ \vee_{\pi \neq I} d(\pi,G_a,G_b) \leq 0 ] \leq \sum_{m=2}^n x^m < \frac{x^2}{1-x} = o(1).
\end{equation*}
\end{proof}

\section{Proof of Converse}
\label{section:converse}
The converse statement depends on the following lemma.
\begin{lemma}
\label{lemma:intersection}
Let $G_a$ and $G_b$ be graphs on the vertex set $[n]$.
For all $\pi \in Aut(G_a \cap G_b)$, $d(\pi,G_a,G_b) \leq 0$.
\end{lemma}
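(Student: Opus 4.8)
The plan is to show that applying an automorphism $\pi$ of the intersection graph $G_a \cap G_b$ can only keep the matching quality the same or improve it (in the $\Delta$ sense), hence $d(\pi, G_a, G_b) = \Delta(G_a \circ \sigma, G_b) - \Delta(G_a, G_b) \le 0$. Write $H = G_a \cap G_b$, so $E(H) = E(G_a) \cap E(G_b)$, and let $\sigma = l(\pi)$. Since $\pi \in \mathrm{Aut}(H)$, the lifted map $\sigma$ permutes $E(H)$: that is, $H(\sigma(e)) = H(e)$ for every $e \in \binom{[n]}{2}$. The idea is to split the symmetric-difference count $\Delta(G_a \circ \sigma, G_b) = \sum_e |G_a(\sigma(e)) - G_b(e)|$ according to whether $e \in E(H)$ or not, and compare term-by-term with $\Delta(G_a, G_b) = \sum_e |G_a(e) - G_b(e)|$.

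First I would handle the easy part: the total number of edges is preserved, $|E(G_a \circ \sigma)| = |E(G_a)|$, since $\sigma$ is a bijection on $\binom{[n]}{2}$. Next, for $e \in E(H)$ we have $G_b(e) = 1$ and also $G_a(\sigma(e)) = 1$ — the latter because $H(\sigma(e)) = H(e) = 1$ forces $G_a(\sigma(e)) = 1$ (as $E(H) \subseteq E(G_a)$). Hence $|G_a(\sigma(e)) - G_b(e)| = 0$ for every $e \in E(H)$, so these edges contribute nothing to $\Delta(G_a \circ \sigma, G_b)$. Meanwhile the same edges contribute $\sum_{e \in E(H)} |G_a(e) - G_b(e)| = 0$ to $\Delta(G_a, G_b)$ as well, since $e \in E(H)$ means $G_a(e) = G_b(e) = 1$. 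So the contributions of $E(H)$ to both quantities agree (both zero), and the comparison reduces to the edges outside $H$.

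For the remaining edges, I would argue by a counting/injectivity argument rather than term-by-term. Restricted to $\binom{[n]}{2} \setminus E(H)$, the map $\sigma$ is a bijection of this set to itself (it permutes $E(H)$, hence permutes the complement). On this set, $G_b(e) = 1$ exactly on $E(G_b) \setminus E(H)$, and the disagreements $|G_a(\sigma(e)) - G_b(e)|$ over $e \notin E(H)$ count: (i) $e \notin E(H)$ with $G_a(\sigma(e)) = 1$, $G_b(e) = 0$, plus (ii) $e \notin E(H)$ with $G_a(\sigma(e)) = 0$, $G_b(e) = 1$. The cleanest route is to observe that $\Delta(G_a \circ \sigma, G_b) = |E(G_a)| + |E(G_b)| - 2|E(G_a \circ \sigma) \cap E(G_b)|$ and $\Delta(G_a, G_b) = |E(G_a)| + |E(G_b)| - 2|E(G_a) \cap E(G_b)|$, so it suffices to show $|E(G_a \circ \sigma) \cap E(G_b)| \ge |E(G_a) \cap E(G_b)| = |E(H)|$. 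And indeed $E(H) = \sigma(E(H)) \subseteq E(G_a \circ \sigma) \cap E(G_b)$: every $e \in \sigma(E(H))$ lies in $E(G_a \circ \sigma)$ by definition, and lies in $E(G_b)$ because $\sigma(E(H)) = E(H) \subseteq E(G_b)$. Therefore $|E(G_a \circ \sigma) \cap E(G_b)| \ge |E(H)|$, which gives $\Delta(G_a \circ \sigma, G_b) \le \Delta(G_a, G_b)$, i.e. $d(\pi, G_a, G_b) \le 0$.

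The main obstacle — really the only subtlety — is bookkeeping the direction of the permutation correctly: $\Delta(G_a \circ \sigma, G_b)$ involves $G_a$ evaluated at $\sigma(e)$, and one must be careful that $\pi \in \mathrm{Aut}(H)$ gives $\sigma(E(H)) = E(H)$ (equivalently $E(H) = \sigma^{-1}(E(H))$), so that the image of the common edge set under the relabeling is again entirely inside $E(G_b)$. Once that is pinned down, the intersection-size inequality $|E(G_a \circ \sigma) \cap E(G_b)| \ge |E(H)|$ is immediate and the conclusion follows from the standard identity relating $\Delta$ to edge-set intersection sizes.
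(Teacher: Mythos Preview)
Your argument is correct. The key observation---that $\sigma$ permutes $E(H)$ and hence $E(H)\subseteq E(G_a\circ\sigma)\cap E(G_b)$---is the same as the paper's, but you package it differently. The paper argues cycle by cycle: on any cycle $C$ of $\sigma$ lying entirely outside $E(H)$, every $e\in C$ has $(G_a,G_b)(e)\in\{(0,0),(0,1),(1,0)\}$, so the contribution of $C$ to $\Delta(G_a,G_b)$ equals the total edge count $\sum_{e\in C}(G_a(e)+G_b(e))$, which trivially upper-bounds the contribution to $\Delta(G_a\circ\sigma,G_b)$. You instead invoke the global identity $\Delta(G,H)=|E(G)|+|E(H)|-2|E(G)\cap E(H)|$ and reduce everything to the single inclusion $E(H)\subseteq E(G_a\circ\sigma)\cap E(G_b)$. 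Your route is a bit slicker and avoids any cycle bookkeeping; the paper's route makes the ``slack'' more visible (it shows exactly where the inequality can be strict, namely when $\sigma$ creates new overlaps among the $(1,0)$/$(0,1)$ edges). One small wording quibble: when you write ``every $e\in\sigma(E(H))$ lies in $E(G_a\circ\sigma)$ by definition,'' the justification is not ``by definition'' but rather because $\sigma(E(H))=E(H)$ and then $\sigma(e)\in E(H)\subseteq E(G_a)$; you do say this elsewhere, so the logic is fine.
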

\begin{proof}
Let $\sigma = l(\pi)$ and recall that
\begin{IEEEeqnarray*}{rCl}
\Delta(G,H) &=& \sum_{e \in \binom{[n]}{2}} |G(e) - H(e)|\\
d(\pi,G_a,G_b) &=& \Delta(G_a \circ \sigma, G_b) - \Delta(G_a, G_b).
\end{IEEEeqnarray*}

Let $e \in \binom{[n]}{2}$.
Suppose that $(G_a,G_b)(e) = (1,1)$, so $(G_a \cap G_b)(e) = 1$.
Because $\pi \in Aut(G_a \cap G_b)$, $(G_a \cap G_b)(\sigma(e)) = 1$.
Then the contribution of $e$ 
to both $\Delta(G_a,G_b)$ and $\Delta(G_a \circ \sigma, G_b)$ is zero.

Suppose $(G_a \cap G_b)(e) = 0$.
The cycle of $\sigma$ containing $e$ is $C = \{\sigma^i(e) : i \in \N\}$.
For all $e' \in C$, $(G_a \cap G_b)(e') = 0$ and $(G_a, G_b)(e')$ is $(0,0)$, $(0,1)$, or $(1,0)$.
Thus the contribution of $C$ to $\Delta(G_a,G_b)$ is equal to total number of edges in $G_a$ and $G_b$ in $C$.
The contribution of $C$ to $\Delta(G_a \circ \sigma, G_b)$ cannot be larger.
\end{proof}

It is well-known that \ER graphs with average degree less than $\log n$ have many automorphisms \cite{bollobas_random_1998}.
The following lemma is precise version of this fact that is suitable for our purposes.
\begin{lemma}
\label{lemma:isolated}
Let $G \sim ER(n,p)$.
If $p \leq \frac{\log n - c_n}{n}$ and $c_n \to \infty$, then there is some sequence $\epsilon_n \to 0$ such that $P[|Aut(G)| \leq \epsilon_n^{-1}] \leq \epsilon_n$.

\end{lemma}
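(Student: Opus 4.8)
The plan is to produce many automorphisms of $G$ by exhibiting many isolated vertices, which is what one expects below the connectivity threshold. If $G$ has $X$ isolated vertices, then every permutation of $[n]$ that permutes those $X$ vertices among themselves and fixes every other vertex is an automorphism of $G$ (permuting isolated vertices does not change the edge set), so $|Aut(G)| \geq X! \geq X$. Hence it suffices to show that $X$ is large with high probability, with an explicit rate, and then to package that into a single sequence $\epsilon_n$.

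First I would estimate the mean. Writing $X = \sum_{i \in [n]} X_i$, where $X_i$ is the indicator that vertex $i$ is isolated, we have $\mu_n := \E[X] = n(1-p)^{n-1}$. Since this is decreasing in $p$ and the hypothesis forces $p \leq \frac{\log n}{n} \to 0$, using $\log(1-p) \geq -p/(1-p)$ one obtains, uniformly over $0 \leq p \leq \frac{\log n - c_n}{n}$, the bound $\mu_n \geq e^{c_n - o(1)}$, so $\mu_n \to \infty$. For concentration I would use the second moment method: for $i \neq j$ the two stars at $i$ and $j$ share exactly the edge $\{i,j\}$, so $\E[X_i X_j] = (1-p)^{2n-3}$, giving $\E[X^2] = \mu_n + n(n-1)(1-p)^{2n-3}$ and therefore $\mathrm{Var}(X) \leq \mu_n + \frac{p}{1-p}\mu_n^2$. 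Chebyshev's inequality then yields
\begin{equation*}
P\left[X \leq \tfrac{1}{2}\mu_n\right] \leq P\left[|X - \mu_n| \geq \tfrac{1}{2}\mu_n\right] \leq \frac{4\,\mathrm{Var}(X)}{\mu_n^2} \leq \frac{4}{\mu_n} + \frac{4p}{1-p} =: \delta_n,
\end{equation*}
and $\delta_n \to 0$ because $\mu_n \to \infty$ and $p \to 0$.

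To finish, I would set $\epsilon_n = \max(\delta_n, 2/\mu_n)$, so that $\epsilon_n \to 0$ and $\epsilon_n^{-1} \leq \tfrac{1}{2}\mu_n$. Whenever $X > \tfrac{1}{2}\mu_n$ we have $|Aut(G)| \geq X > \tfrac{1}{2}\mu_n \geq \epsilon_n^{-1}$, hence
\begin{equation*}
P\left[|Aut(G)| \leq \epsilon_n^{-1}\right] \leq P\left[X \leq \tfrac{1}{2}\mu_n\right] \leq \delta_n \leq \epsilon_n,
\end{equation*}
which is exactly the claimed statement.

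The argument is largely routine; the only steps needing care are the elementary estimates showing $n(1-p)^{n-1} \to \infty$ uniformly over the allowed range of $p$ (in particular controlling the discrepancy between $(1-p)^{n-1}$ and $e^{-pn}$ when $p$ is as large as $\frac{\log n - c_n}{n}$), and keeping the Chebyshev bound $\delta_n$ explicitly $o(1)$ so that a concrete admissible $\epsilon_n$ can be named.
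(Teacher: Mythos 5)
Your proposal is correct and follows essentially the same route as the paper: both arguments bound $|Aut(G)|$ from below by counting isolated vertices, compute $\E[X]=n(1-p)^{n-1}$ and the second moment, and apply Chebyshev to show $X\geq\tfrac12\E[X]$ with high probability while $\E[X]\to\infty$. The only cosmetic differences are that you explicitly name an admissible sequence $\epsilon_n=\max(\delta_n,2/\mu_n)$ and use the weaker (but sufficient) bound $|Aut(G)|\geq X$ in place of the paper's $|Aut(G)|\geq X!$.
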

\begin{proof}
Let $X$ be the number of isolated vertices in $G$.
A permutation that moves only isolated vertices is an automorphism of $G$, so $|Aut(G)| \geq X!$.
We will use Chebyshev's inequality to bound the probability that there are few isolated vertices in $G$: 
\begin{equation*}
P[X \leq \frac{1}{2} \E[X]] \leq 4\frac{\E[X^2] - \E[X]^2}{\E[X]^2}.
\end{equation*}

The probability that a particular vertex is isolated is $(1-p)^{n-1}$.
Thus $\E[X] = n (1-p)^{n-1}$.
The probability that a particular pair of vertices are both isolated is $(1-p)^{2n-3}$.
Thus $E\left[\binom{X}{2}\right] = \binom{n}{2} (1-p)^{2n-3}$.
Then
\begin{IEEEeqnarray*}{Cl}
 & \E[X^2]\E[X]^{-2} - 1\\
=& \left(2 E\left[\binom{X}{2}\right] + \E[X]\right)\E[X]^{-2} - 1\\
=& \frac{(n^2-n) (1-p)^{2n-3}}{n^2 (1-p)^{2n-2}} + E[X]^{-1} - 1\\
=& (1-p)^{-1} - n^{-1}(1-p)^{-1} + E[X]^{-1} - 1\\
\leq& p + E[X]^{-1}.
\end{IEEEeqnarray*}
Recall that $p \leq \frac{\log n}{n}$, so $p \to 0$.
Finally we compute the limiting behavior of the expected value of $X$:
\begin{IEEEeqnarray*}{rCl}
\E[X] 
&=& n(1-p)^{n-1}\\
&=& n\left(1 + \frac{p}{1-p}\right)^{-(n-1)}\\
&\geq& n\left(\exp\left(\frac{p}{1-p}\right)\right)^{-n}\\
&=& \exp\left(\log n - \frac{np}{1-p}\right)\\
&=& \exp\left(\frac{c_n - p \log n}{1-p}\right).
\end{IEEEeqnarray*}
Note that $p \log n \to 0$ and $c_n \to \infty$, so $\E[X] \to \infty$.
Thus $P[X \leq \frac{1}{2} \E[X]] \to 0$. 
\end{proof}

\begin{proof}[of Theorem~\ref{thm:main}, converse part]
For all sufficiently large $n$, we have $\frac{p_{11}p_{00}}{p_{10}p_{01}} > 1$, so from Lemma~\ref{lemma:posterior}, if $\Delta(G_a,G_b) \geq \Delta(G_a, G_b \circ \sigma)$, then the posterior probability of $\pi$ is at least as large as the true permutation. 
From Lemma~\ref{lemma:intersection}, there are at least $|Aut(G_a \cap G_b)|$ such permutations.
Thus any estimator for $\Pi$ succeeds with probability at most $1/|Aut(G_a \cap G_b)|$.
The graph $G_a \cap G_b$ is distributed as $ER(n,p_{11})$.
With high probability, the size of the automorphism group of an $ER(n,p_{11})$ graph goes to infinity with $n$.
More precisely, if $p_{11} \leq \frac{\log n - \omega(1)}{n}$, then from Lemma~\ref{lemma:isolated} there is some sequence $\epsilon_n \to 0$ such that 
\begin{equation*}
P\left[\frac{1}{|Aut(G_a \cap G_b)|} \geq \epsilon_n\right] \leq \epsilon_n.
\end{equation*}
Any estimator succeeds with probability at most $2\epsilon_n$.
\end{proof}

\section{Negative Correlation}
\label{section:negative}
In this section, we consider the problem of deanonymizing negatively correlated \ER graph.
Such a pair of graphs have fewer edges in common that an independently generated pair.
In the most extreme case the graphs avoid each other completely.
Thus it is somewhat surprising that almost the same analysis yields an achievability result for both the positively and negatively correlated regimes.

In the sparse regime with significant negative correlation, we have $p_{00} \to 1$, $\frac{p_{00} p_{11}}{p_{01} p_{10}} \to 0$, and
\begin{IEEEeqnarray*}{rCl}
-\rho
&=& \frac{-p_{11}p_{00} + p_{01}p_{10}}{\sqrt{p_{1*}p_{0*}p_{*1}p_{*0}}}\\
&=& \frac{p_{01}p_{10}}{\sqrt{p_{1*}p_{*1}}}(1+o(1))\\
&\leq& \sqrt{p_{1*}p_{*1}}(1+o(1))\\
&=& o(1).
\end{IEEEeqnarray*}
Thus it is impossible to achieve the same level of correlation as in the positive case.
To counteract this, much higher edge densities are required to make deanonymization feasible.

The diameter of a graph is the maximum distance between a pair of vertices.
As we have shown in Section~\ref{section:converse}, for positively correlated graph pairs, the threshold for perfect recovery of the permutation is closely related to the threshold for a single ER graph being connected, i.e. having a finite diameter.
For negatively correlated pairs, the recovery threshold is related to the threshold for a single ER graph having a diameter of two.


The following lemma is analogous to Lemma~\ref{lemma:intersection}.
Let $G_a \setminus G_b$ be the graph with edge set $E(G_a) \setminus E(G_b)$.
\begin{lemma}
\label{lemma:difference}
Let $\pi$ be a permutation of $[n]$ that contains only cycles of length one and two.
If $d(\pi, G_a \setminus G_b, G_b \setminus G_a) = 0$, then $d(\pi,G_a,G_b) \geq 0$.
\end{lemma}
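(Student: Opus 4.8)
The plan is to exploit the hypothesis that $\sigma = l(\pi)$ consists only of fixed points and transpositions, so that $d(\pi,\cdot,\cdot)$ decomposes cleanly into per-transposition contributions that can be read off from the values of the graph pair on the two edge-slots of each transposition. First I would note that if $\pi$ has only cycles of length one and two, then $\sigma = l(\pi)$ does too: any $e \in \binom{[n]}{2}$ meets at most two of the points moved by $\pi$, and a direct check of the cases (both endpoints fixed; one endpoint moved; both endpoints moved, in the same or in different transpositions of $\pi$) shows the $\sigma$-orbit of $e$ has size at most two. This is the setting of Remark~\ref{remark}. A fixed point $e$ of $\sigma$ contributes $0$ to $d(\pi,G,H)$. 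For a transposition $\{e,e'\}$ of $\sigma$, using $|x-y| = x + y - 2xy$ for $x,y \in \{0,1\}$, the contribution to $d(\pi,G,H) = \Delta(G\circ\sigma,H) - \Delta(G,H)$ simplifies to $2\bigl(G(e)-G(e')\bigr)\bigl(H(e)-H(e')\bigr)$.

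This contribution is nonzero only when $G(e)\neq G(e')$ and $H(e)\neq H(e')$; inspecting the configurations, it equals $+2$ exactly when $(G,H)$ takes the values $(1,1)$ and $(0,0)$ on $\{e,e'\}$, and $-2$ exactly when it takes the values $(1,0)$ and $(0,1)$ — precisely the dichotomy recorded in Remark~\ref{remark}. Apply this with $(G,H) = (G_a \setminus G_b,\, G_b \setminus G_a)$. Since $(G_a \setminus G_b)(e) = 1$ iff $(G_a,G_b)(e) = (1,0)$ and $(G_b \setminus G_a)(e) = 1$ iff $(G_a,G_b)(e) = (0,1)$, the pair $(G_a\setminus G_b, G_b\setminus G_a)$ can never realize the ``$(1,1)$ and $(0,0)$'' pattern on a transposition, and it realizes the ``$(1,0)$ and $(0,1)$'' pattern on $\{e,e'\}$ exactly when $(G_a,G_b)$ equals $(1,0)$ on one slot and $(0,1)$ on the other. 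Hence $d(\pi, G_a\setminus G_b, G_b\setminus G_a) = -2M$, where $M$ is the number of transpositions of $\sigma$ on which $(G_a,G_b)$ takes the values $(1,0)$ and $(0,1)$; the hypothesis $d(\pi,G_a\setminus G_b,G_b\setminus G_a)=0$ therefore forces $M = 0$.

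Finally, apply the contribution formula with $(G,H) = (G_a,G_b)$: every transposition of $\sigma$ contributes $+2$, $-2$, or $0$, and by the previous paragraph no transposition contributes $-2$. Summing over all cycles of $\sigma$ gives $d(\pi,G_a,G_b) \geq 0$, as claimed. The only real work is the per-transposition identity $2\bigl(G(e)-G(e')\bigr)\bigl(H(e)-H(e')\bigr)$ together with the bookkeeping that translates value patterns of $(G_a\setminus G_b, G_b\setminus G_a)$ back to those of $(G_a,G_b)$; both are short, so I do not expect a genuine obstacle — the lemma is essentially Remark~\ref{remark} applied to two different graph pairs and then compared.
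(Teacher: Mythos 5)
Your proof is correct and follows essentially the same route as the paper's: both arguments reduce to the observation (Remark~\ref{remark}) that for a lifted involution the only negative contributions come from cycles carrying the $(1,0)$/$(0,1)$ pattern, note that $(G_a\setminus G_b, G_b\setminus G_a)$ never realizes $(1,1)$ so its $d$-value is a sum of nonpositive terms that the hypothesis forces to vanish, and then transfer the absence of negative cycles back to $(G_a,G_b)$. Your explicit per-transposition identity $2\bigl(G(e)-G(e')\bigr)\bigl(H(e)-H(e')\bigr)$ and the verification that $l(\pi)$ is itself an involution are details the paper leaves implicit, but they do not change the argument.
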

\begin{proof}
The permutation $\sigma = l(\pi)$ also contains only cycles of length one and two.
Recall Remark~\ref{remark}.
Thus the only cycles that contribute positively to $d(\pi,G,H)$ are those containing $e,e' \in \binom{[n]}{2}$ such that $(G,H)(e) = (1,1)$ and $(G,H)(e') = (0,0)$.
The only cycles that contribute negatively to $d(\pi,G,H)$ are those containing $e,e' \in \binom{[n]}{2}$ such that $(G,H)(e) = (1,0)$ and $(G,H)(e') = (0,1)$.

$(G_a \setminus G_b, G_b \setminus G_a)(e) = (0,0)$ for all $e$ such that $(G_a,G_b)(e) = (1,1)$.
Elsewhere $(G_a \setminus G_b, G_b \setminus G_a)(e) = (G_a,G_b)(e)$.
In $\sigma$, there are no cycles that contribute positively to $d(\pi,G_a \setminus G_b, G_b \setminus G_a)$.
Because $d(\pi,G_a \setminus G_b, G_b \setminus G_a) = 0$, there are also no cycles that contribute negatively.
Thus there are also no cycles that contribute negatively to $d(\pi,G_a,G_b) \geq 0$.
\end{proof}

Let $N_a(u)$ be the neighborhood of vertex $u$ in $G_a$.
\begin{lemma}
Let $G_a$ and $G_b$ be graphs on $[n]$ such that $E(G_a \cap G_b) = \varnothing$.
Let $u,v \in [n]$ be distinct vertices.
Let $\pi$ be the permutation that exchanges $u$ and $v$ and fixes the rest of $[n]$.
If $N_a(u) \cap N_b(v) = \varnothing$ and $N_a(v) \cap N_b(u) = \varnothing$, then $d(\pi, G_a, G_b) = 0$.
\end{lemma}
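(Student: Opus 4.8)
The plan is to reduce the claim to Remark~\ref{remark}. First I would determine the cycle structure of $\sigma = l(\pi)$. A pair $e \in \binom{[n]}{2}$ is moved by $\sigma$ exactly when it contains precisely one of $u$ and $v$; note that $\{u,v\}$ itself is fixed, since $\sigma(\{u,v\}) = \{\pi(u),\pi(v)\} = \{v,u\} = \{u,v\}$. The moved pairs split into the transpositions $\{u,w\} \leftrightarrow \{v,w\}$, one for each $w \in [n] \setminus \{u,v\}$. Hence $\sigma$ has only cycles of length one and two, so Remark~\ref{remark} applies, and it suffices to show that no $2$-cycle of $\sigma$ contributes positively to $d(\pi,G_a,G_b)$ and no $2$-cycle contributes negatively; then every cycle contributes zero and $d(\pi,G_a,G_b) = 0$.

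Second, I would rule out positive contributions. By Remark~\ref{remark}, a $2$-cycle contributing positively must contain a pair $e$ with $(G_a,G_b)(e) = (1,1)$, i.e. $e \in E(G_a) \cap E(G_b) = E(G_a \cap G_b)$. This set is empty by hypothesis, so no such cycle exists.

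Third, I would rule out negative contributions. By Remark~\ref{remark}, a $2$-cycle $\{u,w\} \leftrightarrow \{v,w\}$ contributing negatively must have one of its pairs equal to $(1,0)$ and the other equal to $(0,1)$ under $(G_a,G_b)$. If $(G_a,G_b)(\{u,w\}) = (1,0)$ and $(G_a,G_b)(\{v,w\}) = (0,1)$, then $\{u,w\} \in E(G_a)$ and $\{v,w\} \in E(G_b)$, so $w \in N_a(u) \cap N_b(v)$, contradicting the hypothesis. In the reverse orientation $w \in N_a(v) \cap N_b(u)$, again a contradiction. Hence there are no negatively contributing cycles, so $d(\pi,G_a,G_b) = 0$.

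The proof is a short case check; the one place to be careful is lining up the two neighborhood hypotheses with the two orientations of a negatively contributing $2$-cycle, and keeping in mind that the single pair $\{u,v\}$ meeting both $u$ and $v$ is fixed by $\sigma$ and hence plays no role.
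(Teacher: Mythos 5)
Your proof is correct and follows essentially the same route as the paper: identify that $\sigma$ decomposes into the fixed pair $\{u,v\}$, fixed pairs avoiding $u,v$, and the transpositions $\{u,w\}\leftrightarrow\{v,w\}$, then argue that the hypotheses eliminate all contributing cycles. The paper's own proof does the analysis directly (asserting $d(\pi,G_a,G_b)=-|N_a(u)\cap N_b(v)|-|N_a(v)\cap N_b(u)|$ and using $E(G_a\cap G_b)=\varnothing$ implicitly to rule out misalignments); your version is tidier in that it explicitly invokes Remark~\ref{remark} to classify the possible nonzero cycle contributions and then checks each of the two classes against a hypothesis, which makes the role of the $E(G_a\cap G_b)=\varnothing$ assumption visible rather than silent.
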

\begin{proof}
For all $i \in [n] \setminus \{u,v\}$, $\sigma$ exchanges $\{u,i\}$ with $\{v,i\}$ and fixes all other vertex pairs.
Then $\sigma$ aligns two edges if there is some $i$ such that either $G_a(\{u,i\}) = 1$ and $G_b(\{v,i\}) = 1$ or $G_a(\{v,i\}) = 1$ and $G_b(\{u,i\}) = 1$.
Thus $d(\pi,G_a, G_b) =  - |N_a(u) \cap N_b(v)|  - |N_a(v) \cap N_b(u)| = 0$.
\end{proof}

\begin{lemma}
\label{lemma:pair-lb}
Let $(G_a,G_b) \sim ER(n,\bfp)$ with $p_{11} = 0$.
If $p_{01}p_{10} \leq \frac{\log n - \omega(1)}{n}$, then with probability $1-o(1)$ there are $\omega(1)$ vertex pairs $u,v \in [n]$ such that $|N_a(u) \cap N_b(v)| = 0$ and $|N_a(v) \cap N_b(u)| = 0$.
\end{lemma}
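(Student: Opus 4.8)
The plan is to use the second moment method on the number of "good" vertex pairs, following the same template as the proof of Lemma~\ref{lemma:isolated}. For distinct $u,v \in [n]$, call the pair $\{u,v\}$ \emph{good} if $N_a(u) \cap N_b(v) = \varnothing$ and $N_a(v) \cap N_b(u) = \varnothing$. Let $X$ be the number of good pairs. First I would compute $\E[X]$. For a fixed vertex $i \notin \{u,v\}$, the event $i \notin N_a(u) \cap N_b(v)$ means it is not the case that $G_a(\{u,i\}) = 1$ and $G_b(\{v,i\}) = 1$ simultaneously; since $p_{11} = 0$ these two edge slots are on distinct vertex pairs and hence independent, so the probability that $i$ lies in $N_a(u) \cap N_b(v)$ is $p_{1*}p_{*1}$. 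Wait — one must be careful: the pairs $\{u,i\}$ and $\{v,i\}$ are distinct (as $u \neq v$), so $G_a(\{u,i\})$ and $G_b(\{v,i\})$ are independent, and likewise for the reversed condition. There are two forbidden events per index $i$, involving the four distinct edge slots $\{u,i\},\{v,i\}$ in $G_a$ and $G_b$; these four slots are all on two vertex pairs, so the relevant marginals are $p_{1*}$ and $p_{*1}$ and by inclusion–exclusion the probability that index $i$ causes neither obstruction is $(1 - p_{1*}p_{*1})^2 (1 + o(1))$, or more simply one computes it exactly. Additionally the slot $\{u,v\}$ itself plays no role. So $P[\{u,v\} \text{ good}] = \left((1-p_{1*}p_{*1})^2\right)^{n-2}(1+o(1))$ and $\E[X] = \binom{n}{2}(1-p_{1*}p_{*1})^{2(n-2)}(1+o(1))$. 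Under $p_{01}p_{10} \leq \frac{\log n - \omega(1)}{n}$, and using $p_{1*}p_{*1} \geq p_{01}p_{10}$ is the wrong direction — instead I should observe that in the relevant sparse regime $p_{1*}p_{*1} = p_{01}p_{10}(1+o(1))$ (since $p_{11}=0$ forces $p_{1*} = p_{10}$, $p_{*1} = p_{01}$ exactly), so $p_{1*}p_{*1} = p_{10}p_{01}$ and $\E[X] = \binom{n}{2}(1 - p_{10}p_{01})^{2(n-2)}(1+o(1))$. A computation analogous to the one for $\E[X]$ in Lemma~\ref{lemma:isolated} — taking logs and using $np_{10}p_{01} \leq \log n - \omega(1)$ — gives $\E[X] \to \infty$.

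Next I would bound the variance. Write $X = \sum_{\{u,v\}} \one_{\{u,v\} \text{ good}}$ and expand $\E[X^2] = \sum_{\{u,v\},\{u',v'\}} P[\text{both good}]$. The dominant contribution is from disjoint pairs $\{u,v\} \cap \{u',v'\} = \varnothing$: there the two goodness events share no vertices, but they are not independent because, e.g., the edge slot $\{u,u'\}$ of $G_a$ appears in the definition of "$u' \notin N_a(u)\cap N_b(v)$"-type conditions for both pairs. However the correlation is weak, exactly as in Lemma~\ref{lemma:isolated} where overlapping vertices contributed only a factor $(1-p)^{-1}$. One shows $P[\{u,v\},\{u',v'\} \text{ both good}] \leq P[\{u,v\} \text{ good}]\,P[\{u',v'\} \text{ good}] \cdot (1 + O(p_{10}p_{01}))^{O(1)}$, and the pairs sharing one vertex are a lower-order term of relative size $O(n^{-1})$. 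Combining, $\frac{\E[X^2]}{\E[X]^2} \leq 1 + O(p_{10}p_{01}) + O(\E[X]^{-1}) \to 1$. Then Chebyshev gives $P[X \leq \frac12 \E[X]] \to 0$, and since $\frac12\E[X] \to \infty$ this yields $X = \omega(1)$ with probability $1-o(1)$, which is the claim.

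The main obstacle is the variance bound: one must carefully track which edge slots are shared between the goodness events of two vertex-disjoint pairs and verify that the resulting positive correlation contributes only a factor tending to $1$. This is the analogue of the $(1-p)^{-1}$ computation in Lemma~\ref{lemma:isolated}, but somewhat more bookkeeping is required because each good-pair event already involves a product over $n-2$ indices and a two-term inclusion–exclusion. I expect the cleanest route is to condition on the edges incident to $\{u,v\} \cup \{u',v'\}$ among themselves, note that conditionally the remaining $n-4$ indices act independently for the two events, and bound the conditional product uniformly. Everything else is a routine repetition of the argument in the proof of Lemma~\ref{lemma:isolated}.
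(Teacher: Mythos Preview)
Your proposal is correct and follows essentially the same route as the paper: define $X$ as the number of good pairs, compute $\E[X]$, show $\E[X]\to\infty$ under the hypothesis, bound the variance by a computation parallel to Lemma~\ref{lemma:isolated}, and conclude via Chebyshev. One small simplification the paper exploits that you missed: since $p_{11}=0$, for each $i\notin\{u,v\}$ the two obstruction events $\{G_a(\{u,i\})=G_b(\{v,i\})=1\}$ and $\{G_a(\{v,i\})=G_b(\{u,i\})=1\}$ are mutually exclusive (their intersection requires $(G_a,G_b)(\{u,i\})=(1,1)$), so the per-index non-obstruction probability is exactly $1-2p_{10}p_{01}$ rather than $(1-p_{10}p_{01})^2(1+o(1))$, giving the clean formula $\E[X]=\binom{n}{2}(1-2p_{10}p_{01})^{n-2}$; your approximate version is asymptotically equivalent and the rest of the argument is unaffected.
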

\begin{proof}
Let $u,v \in [n]$ and let $\pi$ be the permutation that exchanges $u$ and $v$ and fixes the rest of $[n]$.
If $N_a(u) \cap N_b(v) = \varnothing$ and $N_a(v) \cap N_b(u) = \varnothing$, then $d(\pi, G_a, G_b) = 0$.

For each $i \in [n] \setminus \{u,v\}$, 
\begin{equation*}
P[G_a(\{u,i\}) = 1 \wedge G_b(\{v,i\}) = 1] = p_{1*} p_{*1} = p_{10} p_{01}.
\end{equation*}

Thus the probability that $|N_a(u) \cap N_b(v)| = 0$ and $|N_a(v) \cap N_b(u)| = 0$ is $(1 - 2 p_{10} p_{01})^{n-2}$.
Let $X$ be the number of vertex pairs satisfying this property.
Then $\E[X] = \binom{n}{2}(1 - 2p_{10} p_{01})^{n-2}$.
Suppose that $p_{10} p_{01} \leq \frac{\log n - c_n}{n}$ where $c_n \to \infty$.
A computation very similar to the one in Lemma~\ref{lemma:isolated} shows that
\begin{equation*}
\E[X] \geq \frac{1-o(1)}{2} \exp\left(2 \frac{c_n - p_{10} p_{01} \log n}{1- p_{10} p_{01}}\right)
\end{equation*}
Thus $\E[X] \to \infty$.
Another computation similar to that of Lemma~\ref{lemma:isolated} shows that 
$\frac{\E[X^2] - \E[X]^2}{\E[X]^2} = o(1)$.
Then Chebyshev's inequality implies that $P[X \leq \frac{1}{2} \E[X]] = o(1)$.
\end{proof}

\begin{theorem}
\label{thm:negative}
Let $(G_a,G_b) \sim ER(n,\bfp)$ where $\frac{p_{11}p_{00}}{p_{10}p_{01}} \to 0$.
If $p_{01}p_{10} \geq 2\frac{\log n + \omega(1)}{n}$, then there is a deanonymizer that succeeds with probability $1-o(1)$.
If $p_{01}p_{10} \leq \frac{\log n - \omega(1)}{n}$, then any deanonymizer succeeds with probability $o(1)$.
\end{theorem}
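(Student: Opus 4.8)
The plan is to prove the two halves separately, closely paralleling the proof of Theorem~\ref{thm:main}, with the difference graphs $G_a \setminus G_b$ and $G_b \setminus G_a$ playing the role that the intersection graph $G_a \cap G_b$ played in the positively correlated case.

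For the achievability half I would invoke Theorem~\ref{thm:general-achievability} directly. Writing $q = \left(\sqrt{p_{11}p_{00}} - \sqrt{p_{10}p_{01}}\right)^2 = p_{10}p_{01}\left(1 - \sqrt{\frac{p_{11}p_{00}}{p_{10}p_{01}}}\right)^2$, the hypothesis $\frac{p_{11}p_{00}}{p_{10}p_{01}} \to 0$ gives $q = p_{10}p_{01}(1 - o(1))$, so the assumption $p_{10}p_{01} \geq 2\frac{\log n + \omega(1)}{n}$ yields $q \geq 2\frac{\log n + \omega(1)}{n}$ and Theorem~\ref{thm:general-achievability} supplies a deanonymizer that succeeds with probability $1 - o(1)$. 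This is exactly the computation used in the achievability half of Theorem~\ref{thm:main}, with the roles of $p_{11}$ and $p_{10}p_{01}$ interchanged.

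For the converse half, first observe from Lemma~\ref{lemma:posterior} that since $\frac{p_{11}p_{00}}{p_{10}p_{01}} \to 0$, for all large $n$ we have $\frac{p_{10}p_{01}}{p_{11}p_{00}} > 1$, so the posterior is \emph{nondecreasing} in the symmetric difference $\Delta$; the MAP estimator now prefers matchings of \emph{large} symmetric difference. Working with $G_a$ and $G_b$ directly and the identity as the true answer (as justified after Lemma~\ref{lemma:posterior}), any permutation $\pi$ with $d(\pi, G_a, G_b) \geq 0$ has posterior at least as large as that of the true permutation, so any deanonymizer succeeds with probability at most $1/T$, where $T$ is the number of such permutations; it thus suffices to exhibit $T \to \infty$ with probability $1 - o(1)$. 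To do so I would reduce to the case $p_{11} = 0$: put $G'_a = G_a \setminus G_b$ and $G'_b = G_b \setminus G_a$, so $(G'_a, G'_b) \sim ER(n,\bfp')$ with $p'_{11} = 0$, $p'_{10}p'_{01} = p_{10}p_{01} \leq \frac{\log n - \omega(1)}{n}$, and $E(G'_a \cap G'_b) = \varnothing$. Applying Lemma~\ref{lemma:pair-lb} to $(G'_a, G'_b)$ gives, with probability $1 - o(1)$, $\omega(1)$ vertex pairs $u,v$ whose $G'_a$/$G'_b$ cross-neighborhoods are disjoint; for each, the unnamed neighborhood lemma preceding Lemma~\ref{lemma:pair-lb} gives $d(\pi_{uv}, G'_a, G'_b) = 0$ for the transposition $\pi_{uv}$, and since $\pi_{uv}$ consists only of cycles of length one and two, Lemma~\ref{lemma:difference} upgrades this to $d(\pi_{uv}, G_a, G_b) \geq 0$. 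Distinct unordered pairs give distinct transpositions, so $T = \omega(1)$ on this high-probability event, whence any deanonymizer succeeds with probability $o(1)$.

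The genuinely non-obvious ingredient is already provided: that the negative regime admits an achievability bound with the same constant, which rests on Theorem~\ref{thm:general-achievability} being proved for all $\bfp$. Beyond that, the only delicate point in achievability is the same one as in Theorem~\ref{thm:main}, namely that the $(1 - o(1))$ factor relating $q$ to $p_{10}p_{01}$ must not erode the logarithmic slack; the converse is essentially bookkeeping that assembles Lemmas~\ref{lemma:difference} and \ref{lemma:pair-lb} with the intervening neighborhood lemma, the two things to be careful about being the reversed direction of the MAP preference and the passage to the disjoint graphs $G_a \setminus G_b$, $G_b \setminus G_a$.
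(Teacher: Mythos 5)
Your proposal is correct and follows the paper's own proof essentially step for step: the achievability half reduces to Theorem~\ref{thm:general-achievability} via $q = p_{10}p_{01}(1-o(1))$, and the converse chains Lemma~\ref{lemma:pair-lb}, the unnamed neighborhood lemma, and Lemma~\ref{lemma:difference} to produce $\omega(1)$ transpositions with $d(\pi,G_a,G_b) \geq 0$, which in the regime $\frac{p_{11}p_{00}}{p_{10}p_{01}} < 1$ have posterior at least that of the identity. Your explicit reduction to $p_{11}=0$ by passing to $(G_a \setminus G_b, G_b \setminus G_a) \sim ER(n,\bfp')$ with $p'_{10}p'_{01}=p_{10}p_{01}$ is a worthwhile clarification of a step the paper leaves implicit, since Lemma~\ref{lemma:pair-lb} as stated assumes $p_{11}=0$ while Theorem~\ref{thm:negative} does not.
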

\begin{proof}
The conditions $p_{00} \to 1$ and $\frac{p_{11}p_{00}}{p_{10}p_{01}} \to 0$ imply $q \to p_{01}p_{10}$.
Thus the first statement follows from Theorem~\ref{thm:general-achievability}.

For sufficiently large $n$, we have $\frac{p_{11}p_{00}}{p_{10}p_{01}} < 1$.
From Lemma~\ref{lemma:posterior}, if $\Delta(G_a,G_b) \leq \Delta(G_a, G_b \circ \sigma)$, then the posterior probability of $\pi$ is at least as large as the true permutation. 
The statement of Lemma~\ref{lemma:difference} identifies some permutation satisfying this condition.
From Lemma~\ref{lemma:pair-lb}, with probability $1-o(1)$ there are $\omega(1)$ such permutations.
Thus any deanonymizer succeeds with probability at most $o(1)$.
\end{proof}

\section{Conclusion}
\label{section:conclusion}
We obtained new achievability and converse bounds for the problem of exact deanonymization of \ER graphs.
For sparse, significantly correlated graphs, these bounds differ by a constant factor of two.
Thus they establish the dependence of the threshold for deanonymization on the level of correlation between the graphs.
We also investigated deanonymization of negatively correlated graphs.
In sparse graphs, because it is impossible to achieve levels of negative correlation that match the possible levels of positive correlation, much higher edge density is required to facilitate deanonymization.

There are several directions that this work could be extended.
For positive correlation, the perfect correlation limit suggests that the converse bound is tight and the achievability can be further improved.
Thus the most obvious next step is to attempt to improve the achievability result to match the converse.
In the analysis of the automorphism group of an \ER graph, different methods were required to handle the region just above the threshold.
It may be possible to adapt these to the more general deanonymization problem.

Several follow-up works have applied the methods from the original paper of Pedarsani and Grossglauser \cite{pedarsani_privacy_2011}) to variations on the deanonymization problem.
We discuss some of these in Section~\ref{section:related-work}.
It is likely that our results and methods can lead to improvements in the results obtained for these extensions.

In this paper, we have focused on exact recovery of the true graph matching.
There are a number of natural relaxations of this condition.
In many of the motivating problems related to data privacy, partial information leakage is still a serious issue.
It remains to be seen which metrics for partial deanonymization are both relevant to practical privacy problems and tractable to analyze.

\bibliographystyle{abbrv}
\bibliography{sigproc,deanon}

\end{document}